\newtheorem{definition}{Definition}[section]
\newtheorem{theorem}{Theorem}[section]
\newtheorem*{conjecture*}{Conjecture}
\newtheorem{corollary}{Corollary}[section]
\newtheorem*{theorem*}{Theorem}
\newtheorem*{corollary*}{Corollary}
\newtheorem{proposition}{Proposition}[section]
\newtheorem{lemma}{Lemma}[section]
\newtheorem{remark}{Remark}[section]
\newcommand{\So}{\mathscr{S}}
\newcommand{\Sop}{\tilde{\mathscr{S}}}
\newcommand{\Sf}{\accentset{\land}{\mathscr{S}}}
\newcommand{\Si}{\underaccent{\lor}{\mathscr{S}}}
\newcommand{\Olin}{\Omega^{-1}\accentset{\scalebox{.6}{\mbox{\tiny (1)}}}{\Omega}}
\newcommand{\Olino}{\accentset{\scalebox{.6}{\mbox{\tiny (1)}}}{\Omega}}
\newcommand{\glinh}{\accentset{\scalebox{.6}{\mbox{\tiny (1)}}}{\hat{\slashed{g}}}}
  \newcommand{\glinto}{\accentset{\scalebox{.6}{\mbox{\tiny (1)}}}{\sqrt{\slashed{g}}}}
\newcommand{\bmlin}{\accentset{\scalebox{.6}{\mbox{\tiny (1)}}}{b}}
\newcommand{\xblin}{\accentset{\scalebox{.6}{\mbox{\tiny (1)}}}{\underline{\hat{\chi}}}}
\newcommand{\xlin}{\accentset{\scalebox{.6}{\mbox{\tiny (1)}}}{{\hat{\chi}}}}
\newcommand{\eblin}{\accentset{\scalebox{.6}{\mbox{\tiny (1)}}}{\underline{\eta}}}
\newcommand{\elin}{\accentset{\scalebox{.6}{\mbox{\tiny (1)}}}{{\eta}}}
\newcommand{\otx}{\accentset{\scalebox{.6}{\mbox{\tiny (1)}}}{\left(\Omega tr \chi\right)}}
\newcommand{\otxb}{\accentset{\scalebox{.6}{\mbox{\tiny (1)}}}{\left(\Omega tr \underline{\chi}\right)}}
\newcommand{\olin}{\accentset{\scalebox{.6}{\mbox{\tiny (1)}}}{\omega}}
\newcommand{\olinb}{\accentset{\scalebox{.6}{\mbox{\tiny (1)}}}{\underline{\omega}}}
\newcommand{\ablin}{\accentset{\scalebox{.6}{\mbox{\tiny (1)}}}{\underline{\alpha}}}
\newcommand{\alin}{\accentset{\scalebox{.6}{\mbox{\tiny (1)}}}{{\alpha}}}
\newcommand{\bblin}{\accentset{\scalebox{.6}{\mbox{\tiny (1)}}}{\underline{\beta}}}
\newcommand{\blin}{\accentset{\scalebox{.6}{\mbox{\tiny (1)}}}{{\beta}}}
\newcommand{\rlin}{\accentset{\scalebox{.6}{\mbox{\tiny (1)}}}{\rho}}
\newcommand{\slin}{\accentset{\scalebox{.6}{\mbox{\tiny (1)}}}{{\sigma}}}
\newcommand{\Klin}{\accentset{\scalebox{.6}{\mbox{\tiny (1)}}}{K}}
\newcommand{\Zlin}{\accentset{\scalebox{.6}{\mbox{\tiny (1)}}}{Z}}
\DeclareMathAlphabet\mathbfcal{OMS}{cmsy}{b}{n}
\title{Conservation laws and flux bounds  \\ for gravitational perturbations \\ of the Schwarzschild metric}
\author{Gustav Holzegel\thanks{g.holzegel@imperial.ac.uk}}
\affil{\small Imperial College London, Department of Mathematics, 
South~Kensington~Campus,~London~SW7~2AZ,~United~Kingdom\vskip.2pc \ }
\begin{document}
\maketitle

\abstract{We derive an energy conservation law for the system of gravitational perturbations on the Schwarzschild spacetime expressed in a double null gauge. The resulting identity involves only first derivatives of the metric perturbation. Exploiting the gauge invariance up to boundary terms of the fluxes that appear, we are able to establish positivity of the  flux on any outgoing null hypersurface to the future of the initial data. This allows us to bound the total energy flux through any such hypersurface, including the event horizon, in terms of initial data. We similarly bound the total energy radiated to null infinity. Our estimates provide a direct approach to a weak form of stability, thereby complementing the proof of the full linear stability of the Schwarzschild solution recently obtained in [M.~Dafermos, G.~Holzegel
and I.~Rodnianski
\emph{The linear stability of the Schwarzschild solution to gravitational perturbations}, arXiv:1601.06467].}

\section{Introduction}
The study of gravitational perturbations of the Schwarzschild spacetime \cite{schwarzschild1916} in general relativity originates in the pioneering work of Regge and Wheeler \cite{Regge} more than 50 years ago. Despite a deepened understanding as well as significant refinements and generalizations over the years, see for instance \cite{Chandrasekhar, Moncrief, martelpoisson, Whiting}, a proof of the full linear stability of the Schwarzschild solution was only given in a recent paper of the author in collaboration with Dafermos and Rodnianski \cite{DHR}. The main result of \cite{DHR} can be stated as follows:

\begin{theorem*}[\cite{DHR}]
General solutions $\Si$ of the system of gravitational perturbations on Schwarzschild arising from suitably normalised characteristic initial data remain uniformly bounded on the black hole exterior and in fact decay inverse polynomially to a linearised Kerr solution $\mathscr{K}$ after adding to $\Si$ a dynamically determined pure gauge solution $\mathscr{G}$, which is itself uniformly bounded by initial data.
\end{theorem*}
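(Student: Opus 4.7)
The plan is to reduce the system of linearised gravitational perturbations in double null gauge to the analysis of a single scalar wave equation on Schwarzschild, and then to reconstruct the metric perturbation by integrating transport equations. First I would extract from the linearised Bianchi identities the two gauge-invariant extremal Weyl components \alin\ and \ablin, which decouple and satisfy the spin $\pm 2$ Teukolsky equations. These equations do not admit a coercive conserved energy, so I would next apply a Chandrasekhar-type transformation: differentiating \alin\ twice along ingoing (and \ablin\ twice along outgoing) null generators, and weighting by appropriate powers of $\Omega$, yields new gauge-invariant quantities \Plin\ and \Pblin\ satisfying the Regge--Wheeler equation, a scalar wave equation with a positive, non-degenerate potential that is tractable by standard vector field techniques.

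Having isolated Regge--Wheeler, I would carry out the now-standard hierarchy of estimates on the Schwarzschild exterior: a red-shift multiplier near the horizon, an integrated local energy decay of Morawetz type (degenerating only at the photon sphere), and an $r^p$-weighted hierarchy propagating $p = 1, 2$ energies toward null infinity. Commuting with angular momentum operators and with the red-shift vector field, this would give uniform boundedness and inverse polynomial decay of \Plin\ and \Pblin\ in terms of initial characteristic data. Inverting the Chandrasekhar transformation --- which reduces to integrating two linear transport equations along the null generators --- would then yield the analogous control of \alin\ and \ablin, with a loss in the decay rate.

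With the extremal gauge-invariant components controlled, the remaining Ricci coefficients, the connection coefficients and the non-extremal Weyl components can be recovered by integrating the null structure equations and the remaining Bianchi identities. Each such integration introduces data along the two characteristic cones, and these constants of integration carry the full residual gauge freedom of the double null formalism together with the finite-dimensional family of linearised Kerr parameters. The decisive step is therefore to \emph{dynamically} select a pure gauge solution $\mathscr{G}$ --- for instance by pinning down the gauge on the future event horizon $\mathcal{H}^+$ and on the initial outgoing cone --- and a linearised Kerr solution $\mathscr{K}$, by prescribing the conserved mass and angular momentum aspects, so that $\Si - \mathscr{G} - \mathscr{K}$ has trivial asymptotic charges and decays.

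I expect the main obstacle to lie not in the scalar analysis of Regge--Wheeler, which is by now well understood, but in closing this reconstruction. The transport equations used to recover $\Si$ tend to lose one power of decay at each integration, whereas the statement demands that $\mathscr{G}$ itself be \emph{uniformly bounded by initial data}, rather than by the (smaller, but dynamically determined) solution $\Si$. Ensuring compatibility of the gauge choice across the bifurcation sphere, identifying commuted quantities that enjoy good behaviour simultaneously at $\mathcal{H}^+$ and at null infinity, and subtracting the correct Kerr piece so that what remains actually decays rather than stabilising to a non-trivial stationary mode, together constitute the substantive work beyond the scalar wave estimates.
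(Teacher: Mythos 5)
Your proposal follows essentially the same route as the proof in \cite{DHR}, whose key ingredients the present paper summarises in its introduction: decoupling of the gauge-invariant extremal components $\alin$, $\ablin$ satisfying Teukolsky equations, a physical-space Chandrasekhar-type transformation to Regge--Wheeler quantities amenable to redshift/Morawetz/$r^p$ estimates, inversion of the transformation by transport equations, recovery of the remaining quantities through the hierarchical structure of the null structure and Bianchi equations, and a dynamically determined (horizon-normalised) pure gauge solution together with subtraction of a linearised Kerr member. You also correctly identify where the substantive difficulty lies, namely in closing the reconstruction and the gauge/Kerr normalisation rather than in the scalar wave analysis.
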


The proof of the above theorem relies on recent advances in the black hole stability problem  \cite{KayWald, Mihalisnotes, BlueSoffer, withYakov, Holzegelspin2}, specifically the complete understanding of the scalar wave equation on black hole exteriors. It provides a complete physical space theory for the study of gravitational perturbations adapting some of the classical insights into the structure of gravitational perturbations \cite{newmanpenrose, Chandrasekhar, bardeen1973, whitingsurvey}.  The key ingredients of the proof may be summarised as follows:
\begin{itemize}
\item expressing the linearised Einstein equations in a double null gauge
\item a complete understanding of ``pure gauge" solutions of the resulting linearised system (preserving the double null form) arising from the diffeomorphism invariance of the full non-linear theory
\item exploiting the existence of gauge invariant quantities which \emph{decouple} from the full system \cite{bardeen1973}, in particular the Teukolsky null curvature components $\alpha$ and $\underline{\alpha}$ 
\item a physical space transformation mapping solutions of the Teukolsky equation to solutions of the Regge-Wheeler equation, for whose solutions robust decay estimates can be derived (see \cite{Chandrasekhar} for a fixed frequency version of these transformations applied to individual modes)
\item identifying a hierarchical structure in the linearised system allowing to estimate all remaining quantities \emph{through transport equations}
\end{itemize}
As a byproduct of the proof in \cite{DHR} one obtains, from the estimates satisfied by the decoupled, gauge-invariant quantities alone, control over the gauge invariant flux\footnote{strictly speaking it is only a \emph{residual gauge invariance} as certain partial gauge choices on the horizon always need to be made in order to estimate the flux} of the linearised shear on the event horizon and the flux of the linearised (weighted) shear on null infinity. 
As the former can be interpreted as a measure of the total energy leaving through the horizon and the latter as a measure of total energy radiated to null infinity, one may view the boundedness of these fluxes as a weak form of stability. This boundedness alone can already be seen to imply, in particular, ``mode stability". 

While control over the aforementioned fluxes on the horizon and null infinity in terms of initial data is of course a much weaker stability statement than the full linear stability result of \cite{DHR} (the latter providing boundedness and decay for \emph{all} dynamical quantities), one may ask whether one can obtain the weaker stability statement directly \emph{without making use of the equations satisfied by decoupled quantities  such as $\alin$ and $\ablin$}. In this paper, we prove that this can indeed be done. We give here a  rough version of our main theorem, Theorem \ref{theo:mtheo} below:

\begin{theorem} \label{theo:mtintro}
General solutions $\Si$ of the system of gravitational perturbations arising from suitably normalised characteristic initial data satisfy the following properties: 
\[
\begin{picture}(0,0)%
\includegraphics{consintro3.pstex}%
\end{picture}%
\setlength{\unitlength}{1263sp}%
\begingroup\makeatletter\ifx\SetFigFont\undefined%
\gdef\SetFigFont#1#2#3#4#5{%
  \reset@font\fontsize{#1}{#2pt}%
  \fontfamily{#3}\fontseries{#4}\fontshape{#5}%
  \selectfont}%
\fi\endgroup%
\begin{picture}(8424,6385)(2689,-8173)
\put(5101,-6211){\rotatebox{315.0}{\makebox(0,0)[lb]{\smash{{\SetFigFont{6}{7.2}{\rmdefault}{\mddefault}{\updefault}{\color[rgb]{0,0,0}data}%
}}}}}
\put(8251,-6661){\rotatebox{45.0}{\makebox(0,0)[lb]{\smash{{\SetFigFont{6}{7.2}{\rmdefault}{\mddefault}{\updefault}{\color[rgb]{0,0,0}data}%
}}}}}
\put(5101,-4561){\rotatebox{45.0}{\makebox(0,0)[lb]{\smash{{\SetFigFont{6}{7.2}{\rmdefault}{\mddefault}{\updefault}{\color[rgb]{0,0,0}$\int_{C_u} r^2 |\Omega \xlin|^2$}%
}}}}}
\put(7651,-2461){\rotatebox{45.0}{\makebox(0,0)[lb]{\smash{{\SetFigFont{5}{6.0}{\rmdefault}{\mddefault}{\updefault}{\color[rgb]{0,0,0}$u$}%
}}}}}
\put(10201,-5011){\rotatebox{45.0}{\makebox(0,0)[lb]{\smash{{\SetFigFont{5}{6.0}{\rmdefault}{\mddefault}{\updefault}{\color[rgb]{0,0,0}$u_0$}%
}}}}}
\put(7801,-4861){\makebox(0,0)[lb]{\smash{{\SetFigFont{5}{6.0}{\rmdefault}{\mddefault}{\updefault}{\color[rgb]{0,0,0}$v$}%
}}}}
\put(6451,-4861){\makebox(0,0)[lb]{\smash{{\SetFigFont{5}{6.0}{\rmdefault}{\mddefault}{\updefault}{\color[rgb]{0,0,0}$u$}%
}}}}
\put(3376,-4561){\rotatebox{315.0}{\makebox(0,0)[lb]{\smash{{\SetFigFont{5}{6.0}{\rmdefault}{\mddefault}{\updefault}{\color[rgb]{0,0,0}$v_0$}%
}}}}}
\put(4876,-3361){\rotatebox{45.0}{\makebox(0,0)[lb]{\smash{{\SetFigFont{8}{9.6}{\rmdefault}{\mddefault}{\updefault}{\color[rgb]{0,0,0}$\mathcal{H}^+$}%
}}}}}
\put(8476,-3061){\rotatebox{315.0}{\makebox(0,0)[lb]{\smash{{\SetFigFont{6}{7.2}{\rmdefault}{\mddefault}{\updefault}{\color[rgb]{0,0,0}$\int_{\mathcal{I}^{\scalebox{.7}{+}}} r^2|\xblin|^2$}%
}}}}}
\put(10801,-5461){\rotatebox{315.0}{\makebox(0,0)[lb]{\smash{{\SetFigFont{8}{9.6}{\rmdefault}{\mddefault}{\updefault}{\color[rgb]{0,0,0}$\mathcal{I}^+$}%
}}}}}
\end{picture}%

\]

\begin{enumerate}
\item The total energy flux of the linearised shear $\xlin$ along {\bf any} outgoing null cone $C_{u}$ with $u \geq u_0$,
\[
\int_{C_u}  |\Omega \xlin|^2 r^2 dv \sin \theta d\theta d\phi \, ,
\] 
 is bounded by initial data, uniformly in $u$. 
\item The total energy flux along the event horizon $\mathcal{H}^+$, 
\[
\int_{\mathcal{H}^+} |\Omega \xlin|^2 r^2 dv \sin \theta d\theta d\phi  \, ,
\] 
is bounded by initial data. 
\item The total linearised gravitational energy flux radiated to null infinity $\mathcal{I}^+$
\[
\int_{\mathcal{I}^+} |\xblin|^2 r^2 du \sin \theta d\theta d\phi  \, ,
\]  
is bounded by initial data.
\end{enumerate}
\end{theorem}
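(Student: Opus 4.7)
The strategy is to derive a spacetime conservation identity for the linearised Einstein system written in double null gauge whose divergence structure yields fluxes on outgoing and ingoing null hypersurfaces that, after judicious use of the residual gauge freedom, control the shear fluxes listed in (1)--(3) in terms of initial data. Concretely, I would seek a one-form $J_\mu$ quadratic in the first-order linearised quantities --- the metric components $\Olino$, $\glin$, $\bmlin$, together with the Ricci coefficients $\xlin$, $\xblin$, $\otx$, $\otxb$, $\elin$, $\eblin$ and their first tangential derivatives --- that is divergence-free on solutions $\Si$ of the linearised system. Such a current can naturally be constructed from the quadratic action for perturbations of the Schwarzschild metric, contracted with the stationary Killing field $T=\partial_t$ (rescaled near $\mathcal{H}^+$ to avoid degeneration).

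Integrating $\nabla^\mu J_\mu=0$ over the region bounded by the initial null cones $C_{u_0}$, $\underline{C}_{v_0}$ and later cones $C_u$, $\underline{C}_v$, Stokes' theorem produces an identity of the schematic form
\[
\mathcal{F}[C_u] + \mathcal{F}[\underline{C}_v] \;=\; \mathcal{F}[C_{u_0}] + \mathcal{F}[\underline{C}_{v_0}] ,
\]
where each $\mathcal{F}$ is an explicit quadratic flux in the first-order perturbation quantities. By construction, each flux is invariant under the residual linearised diffeomorphisms preserving the double null gauge \emph{modulo} boundary integrals on the two-spheres bounding the null cones. Exploiting this, I would make a residual gauge choice on each outgoing cone $C_u$ --- normalising, say, $\Olino$, $\bmlin$ or a sphere-diffeomorphism parameter on $C_u$ --- that puts $\mathcal{F}[C_u]$ in a manifestly non-negative form dominated by $\int_{C_u}|\Omega\xlin|^2 r^2$, absorbing the gauge-induced changes into boundary integrals on the spheres $S_{u,v_0}$ and $S_{u,v}$.

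Combining these two steps, the outgoing flux $\int_{C_u}|\Omega\xlin|^2 r^2$ is bounded by the initial fluxes $\mathcal{F}[C_{u_0}]$ and $\mathcal{F}[\underline{C}_{v_0}]$, plus boundary contributions at $S_{u,v}$; the latter are controlled either by a subsidiary transport estimate along ingoing directions or shown to vanish as $v\to\infty$ using decay at $\mathcal{I}^+$. This yields (1) uniformly in $u$. Item (2) is then obtained by taking $u\to\infty$ along the family of cones approaching $\mathcal{H}^+$, using regularity of $\Omega\xlin$ on the horizon. Item (3) is extracted from the complementary flux $\mathcal{F}[\underline{C}_v]$ after a parallel gauge reduction on $\underline{C}_v$, in the limit $v\to\infty$.

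The heart of the argument is the positivity step: the conservation identity by itself only gives an equality of sign-indefinite quadratic fluxes, and the decisive point must be that in an appropriate residual gauge the $C_u$ flux reorganises into a non-negative quadratic form dominated by the shear energy $|\Omega\xlin|^2 r^2$. Identifying this gauge and verifying that the boundary contributions it generates do not spoil the initial-data estimate --- in particular controlling the coupling to the ingoing shear $\xblin$ that enters through the linearised Raychaudhuri equation --- is the central technical difficulty I anticipate.
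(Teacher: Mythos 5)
Your four-step architecture --- a null-hypersurface conservation identity for a flux quadratic in first-order quantities, gauge invariance of the fluxes modulo sphere boundary terms, a residual gauge choice producing positivity on the outgoing cone, and a limit to $\mathcal{I}^+$ --- is exactly the skeleton of the paper's argument. However, the proposal stops short of the two ideas that make the skeleton work, and where you do commit to specifics they point in the wrong direction. First, the positivity step: the outgoing ($u=\mathrm{const}$) flux is
\[
\int r^2\Big[-2\olin\otxb - \tfrac{1}{2}(\otx)^2 + \tfrac{4M}{r^2}\Olin\otx + 2\Omega^2|\eblin|^2 + \Omega^2|\xlin|^2\Big]\sin\theta\, d\theta\, d\phi\, dv \, ,
\]
and the indefinite terms are killed not by normalising $\Olino$ or $\bmlin$ (neither of which can remove the sign-indefinite part), but by subtracting the pure gauge solution generated by $f(v,\theta,\phi)=\frac{r}{\Omega^2}\int_{v_0}^{v}\otx(u_{fin},\bar v,\theta,\phi)\,d\bar v$, which sets $\otx=0$ on the cone $C_{u_{fin}}$ and hence, by the linearised Raychaudhuri equation, $\olin=0$ there as well; since $\xlin$ is invariant under this class of pure gauge solutions, the residual flux is $\int r^2(\Omega^2|\xlin|^2+2\Omega^2|\eblin|^2)\geq \int r^2|\Omega\xlin|^2$. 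Without identifying this specific normalisation the positivity claim is unsupported, and it is precisely the point you defer as "the central technical difficulty."

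Second, the sphere boundary terms generated by the gauge transformation at $S^2_{u_{fin},v_{fin}}$ do \emph{not} vanish as $v_{fin}\to\infty$ and are not disposed of by a subsidiary transport estimate: the term $-\frac{1}{2}\Omega^{-2}r^3\otx\,\otxb$ survives in the limit and must \emph{cancel} against an equal and opposite boundary term $+\frac{1}{2}r^3\otxb\otx$ that emerges when the ingoing flux is evaluated at null infinity (via the linearised Gauss equation, which gives $r\otxb=-4\Olin+\mathcal{O}(r^{-1})$, and an integration by parts in $u$); the remaining sphere terms are absorbed by Cauchy--Schwarz against the manifestly positive contribution $(\Omega^2 f)^2\,6M/r^2$, yielding a remainder computable from data that tends to zero as $u_{fin}\to\infty$. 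Relatedly, item (3) requires no "parallel gauge reduction on $\underline{C}_v$": the ingoing flux asymptotes automatically to $\int r^2|\xblin|^2$ plus the aforementioned boundary term using only the decay rates of an extendible solution, and in any case a gauge function of the form $f(v,\theta,\phi)$ cannot be used to renormalise quantities on an ingoing cone in the manner you suggest. These omissions constitute the substance of the theorem rather than routine verifications, so as it stands the proposal is a correct plan with its decisive steps missing.
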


We remark immediately that we have stated the theorem for characteristic initial data for convenience (and to make manifest the geometric significance of the terms appearing in the initial data). The asymptotic fluxes are similarly controlled for solutions arising from spacelike initial data.

As mentioned above, the stability statement implicit in Theorem \ref{theo:mtintro} is much weaker than the statement of \cite{DHR}. However, as we will see, the estimate itself requires less control on the initial data: While in~\cite{DHR} certain (up to) second derivatives of curvature were required to be bounded initially to obtain control of the linearised shear on the event horizon, we prove here that initial boundedness of some of the metric and connection coefficients is sufficient to control the total energy fluxes through the event horizon and null infinity. 

Key to the proof of Theorem \ref{theo:mtintro} are certain conservation laws inherent in the system of gravitational perturbations at the level of the linearised metric and connection coefficients.
The existence of such conservation laws for the system of gravitational perturbations on 
static and stationary axisymmetric vacuum and electro-vacuum spacetimes was pioneered by Friedman \cite{Friedman}, Chandrasekhar \cite{Chandrasekhar, Ferrari} and Wald and collaborators \cite{Lee, Burnett}, one motivation being a deeper conceptual understanding of the the reflection and transmission coefficients summing up to unity in the context of black hole scattering theory. At a very general level, one may understand the existence of conservation laws from the fact that the non-linear Einstein equations arise from a variational principle and that the background with respect to which the linearisation is performed admits a non-trivial timelike Killing field.  
Recently, the subject of conservation laws in black hole perturbation theory has been revived through important work of Hollands and Wald \cite{HollandsWald} (see also \cite{Iyer}), who introduced, for general static or stationary axisymmetric spacetimes, a notion of \emph{canonical energy} based on the symplectic structure on the space of perturbations.  See also \cite{Keir} for generalisations of the canonical energy.


For a general conservation law to be useful in controlling the dynamics of the system, one needs the associated energies to be coercive. Remarkably, the canonical energy of \cite{HollandsWald} in particular produces manifestly positive energy fluxes through the event horizon and null infinity for axisymmetric perturbations on stationary backgrounds, and for general perturbations if the background is static. Unfortunately, however, it has not yet been shown to produce non-negative energies on a foliation of spacelike slices connecting the event horizon and null infinity, even for gravitational perturbations on Schwarzschild, which is the case under consideration here. It is this fact which has prevented the direct use of the canonical energy to prove weak stability statements. 

On the other hand, surprisingly perhaps, the positivity of the fluxes on the horizon and null infinity and the implied \emph{monotonicity} of the canonical energy with respect to a foliation of spacelike slices can already be exploited to establish a weak form of \emph{instability} in classes of spacetimes for which instability can be expected. The basic idea is that if one can find a perturbation of negative initial canonical energy, then the energy will remain negative by monotonicity, which in turn prevents future convergence to zero (or rather, to a pure gauge solution) of the perturbation. This insight goes back to \cite{Friedman, HollandsWald} and has been exploited subsequently in \cite{GreenWald} where a form of superradiant instability is established for linear perturbations in the context of asymptotically anti-de Sitter black hole spacetimes in all dimensions.\footnote{See also \cite{Dold} for an explicit construction of exponentially growing modes in the context of the massive wave equation on $4$-dimensional Kerr-AdS spacetimes.} Moreover, from the negativity of the canonical energy one can, for a particular class of perturbations, deduce in fact exponential growth of those perturbations \cite{Prabhu}.

Returning to the case of perturbations on Schwarzschild, we emphasise that the conservation laws presented in this paper will a priori also fail to produce manifestly non-negative energy fluxes on general null hypersurfaces, similar to the canonical energy on general spacelike slices discussed above.\footnote{It is an interesting question whether the conservation law satisfied by the canonical energy of \cite{HollandsWald} can be directly related to the two conservation laws presented in this paper.} Our main achievement here -- besides \emph{stating} a conservation law \emph{in a geometric form} based on a double null foliation-- is therefore to establish certain positivity properties of the fluxes appearing in it by exploiting the gauge invariance (up to boundary terms) of the fluxes, so that the a priori control promised by Theorem \ref{theo:mtintro} can indeed be deduced. Our argument proceeds along the following lines:
\begin{enumerate}
\item The conservation law (cf.~Section \ref{sec:bcl}) is expressed entirely in terms of fluxes through null hypersurfaces, the latter having the advantage that positive definiteness of fluxes is often easier to establish than on spacelike slices.
\item The fluxes appearing are shown to be invariant up to boundary terms (on round $2$-spheres) under the addition (or subtraction) of a large class of pure gauge solutions (PGS); see Section \ref{sec:ginv}.
\item Subtraction of a suitably chosen PGS, normalised to a fixed outgoing null hypersurface $C_{u_{fin}}$, can be exploited to demonstrate positive definiteness of the flux on that hypersurface \emph{up to a boundary term}. Moreover, the linearised shear appearing in this flux is invariant under the subtraction of the PGS and hence controlled in $L^2$. See Section \ref{sec:gaugechoice}.
\item Taking the limit of the ingoing hypersurface to null infinity $\mathcal{I}^+$ shows positive definiteness of the corresponding flux (which asymptotes to the linearised shear in $L^2$) up to a boundary term which cancels the boundary term in $3.$ See Sections \ref{sec:ni} and the proof in Section \ref{sec:maintheo}.
\end{enumerate}

The main result emerging from these steps is Theorem \ref{theo:mtheo} below. See also directly Corollary \ref{cor:mtheo}. 

For completeness, we end the paper by stating a second, independent, conservation law for the system of gravitational perturbations together with the transformation properties of the corresponding fluxes, cf.~Section \ref{sec:2con}. Applications of the two conservation laws to the stability problem will be presented elsewhere.

Finally, we expect the conservation laws presented in this paper to generalise to the case of gravitational perturbations on Kerr and their positivity properties to persist at least for axisymmetric perturbations. See \cite{Dain} for a discussion of coercive conservation laws in the context of axisymmetric perturbations on \emph{extremal} Kerr spacetimes. Note however the presence of the Aretakis instability \cite{Aretakis2, Aretakis3} on the event horizon in the extremal case.

. 


\section{The system of gravitational perturbations}
In this section, we will write out the system of gravitational perturbations in the double null formulation as derived in \cite{DHR}.  We begin by recalling a few notational conventions and introduce the differential operators on the Schwarzschild manifold that will appear in the equations in Section \ref{sec:prelim}. In Section \ref{sec:dynq} we introduce the dynamical quantities of the system and finally collect all equations of the system of gravitational perturbations in Sections \ref{=lmc}--\ref{=lcc}. The last subsection, Section \ref{sec:pg}, introduces a class of exact solutions to the system, \emph{pure gauge solutions}, corresponding to infinitesimal diffeomorphisms of the non-linear theory. These will play an important role later.

\subsection{Preliminaries} \label{sec:prelim}
Let $(\tilde{\mathcal{M}},g)$ denote the maximally extended Schwarzschild spacetime. We denote by $\mathcal{M} \subset \tilde{\mathcal{M}}$ the manifold with boundary which in Kruskal coordinates \cite{Wald} corresponds to $\left(0, U_0\right] \times \left[V_0,\infty\right) \times S^2_{U,V}$ with boundary being the event horizon $\mathcal{H}^+=\{0\} \times \left[V_0,\infty\right) \times S^2_{0,V}$. We will consider the linearised Einstein equations on $\left(\mathcal{M},g\right)$ with characteristic initial data defined on the null cones $U=U_0$ and $V=V_0$. 

As carried out explicitly in Section 4 of \cite{DHR} we define from the Kruskal coordinate system a system of  double null Eddington-Finkelstein (EF) coordinates in the interior of $\mathcal{M}$ via the relation $U=\exp\left(-u/2M\right)$ and $V=\exp\left(v/2M\right)$, determining in particular $u_0$ and $v_0$. See the figure below. In the interior of $\mathcal{M}$ we can write the Schwarzschild metric in EF coordinates (with $r$ given implicitly in terms of $u,v$ (in particular $\partial_v r =-\partial_u r=\Omega^2$), cf.~\cite{DHR}) as
\[
g = -4\Omega^2 du dv + r^2\left(u,v\right) \left(d\theta + \sin^2 \theta d\phi^2\right) \ \ \ \ \ \textrm{with \ $\Omega^2 := 1 - \frac{2M}{r}$}.
\]
We also recall from \cite{DHR} the null frame
\[
 e_3=\frac{1}{\Omega} \partial_u \ \ \ , \ \ \ e_4=\frac{1}{\Omega} \partial_v \ \ \ , \ \ \  e_1, e_2 \textrm{ a (local) frame on $S^2_{u,v}$}
\]
and the associated notion of $S^2_{u,v}$-tensors, specifically $S^2_{u,v}$-scalars, $S^2_{u,v}$-one-forms and symmetric traceless $S^2_{u,v}$-tensors. The linearised equations will be written as a system of equations for such tensors.

The derivative operators $\slashed{\nabla}_3, \slashed{\nabla}_4$ act on these $S^2_{u,v}$-tensors and are defined as the projection of the spacetime covariant derivative in the direction of $\frac{1}{\Omega} \partial_u$ and $\frac{1}{\Omega} \partial_v$ respectively to the tangent space of $S^2_{u,v}$. See Section 4.3.1 of \cite{DHR}. 

The derivative operator $\slashed{\nabla}$ denotes the covariant derivative on $S^2_{u,v}$ equipped with the round metric on the sphere of radius $r\left(u,v\right)$, denoted $\slashed{g}$. The following special angular operators will appear repeatedly in the system:
\[
\slashed{div} \xi := \slashed{\nabla}^A \xi_A \ \ \ , \ \ \ \slashed{curl} \xi := \slashed{\epsilon}^{AB} \slashed{\nabla}_A \xi_B \ \ \ \textrm{and} \ \ \ ( \slashed{div} \, \theta)_B := \slashed{\nabla}^A \theta_{AB}
\]
the first two mapping an $S^2_{u,v}$-one-form $\xi$ into a scalar and the last mapping a symmetric traceless tensor $\theta$ into a one-form. Here $\slashed{\epsilon}_{AB}$ denotes the components of the volume-form associated with $\left(S^2_{u,v}, \slashed{g}\right)$. We also define the operator
\[
\slashed{\mathcal{D}}_1^\star \left(h_1,h_2\right) := -\slashed{\nabla}_A h_1 + \slashed{\epsilon}_{AB} \slashed{\nabla}^B h_2
\]
mapping a pair of functions $h_1,h_2$ into an $S^2_{u,v}$ one-form. Note that its $S^2_{u,v}$-adjoint is the operator $\slashed{\mathcal{D}}_1^\star \xi := \left(\slashed{div} \xi, \slashed{curl} \xi\right)$ mapping an $S^2_{u,v}$-one-form $\xi$ into a pair of functions. Finally, we define
\[
2\slashed{\mathcal{D}}_2^\star \xi := -\slashed{\nabla}_A \xi_B -\slashed{\nabla}_B \xi_A + \slashed{g}_{AB} \left(\slashed{\nabla}^C \xi_C\right)
\]
mapping a one-form $\xi$ into a symmetric traceless tensor on $S^2_{u,v}$. Note that $\slashed{\mathcal{D}}_2^\star$ is the adjoint of $\slashed{div}$.

\subsection{The dynamical quantities} \label{sec:dynq}
We now recall the dynamical quantities of the system of gravitational perturbations. See again \cite{DHR} for details. The dynamical quantities are
\begin{itemize}
\item the linearised metric coefficients
\[
 \frac{\glinto}{\sqrt{\slashed{g}}} \ ,\ \Olin \ , \ \bmlin \  , \  \glinh \, ,
\]
which are two scalars, a one-form and a symmetric traceless two-tensor on $S^2_{u,v}$,
\item the linearised connection coefficients
\[
\otx \ ,  \ \olin \ , \ \olinb \ , \ \otxb \ , \ \elin \ , \ \eblin \ ,  \ \xlin \ , \ \xblin  \, ,
\]
which are four scalars, two one-forms and two symmetric traceless two-tensors on $S^2_{u,v}$,
\item the linearised curvature components
\[
\Klin \ , \ \rlin \ ,\  \slin \ ,\  \blin \ ,\  \bblin \ , \ \alin \ , \  \ablin \ 
\]
which are three scalars\footnote{Strictly speaking $\sigma$ is a two-form on $S^2_{u,v}$ but can be identified by duality with a scalar as it has to be proportional to the volume form.}, two one-forms and two symmetric traceless $S^2_{u,v}$-tensors.
\end{itemize}
 As in \cite{DHR} we will also write 
\begin{align} \label{scollect}
\mathscr{S}=\left(\, \glinh \, , \, \glinto \, , \, \Olino \, , \,  \bmlin\, , \,  \otx \, , \,  \otxb\, , \,  \xlin\, , \, \xblin\, , \,  \eblin \, , \,  \elin \, , \, \olin \, , \,  \olinb \, , \,  \alin \, , \,  \blin \, , \,  \rlin \, , \,  \slin \, , \,  \bblin \, , \,  \ablin \, , \, \Klin \right)
\end{align}
to denote the collection of all dynamical quantities and refer to a solution of the system of gravitational perturbations by $\mathscr{S}$.
Furthermore, linearised metric and connection coefficients will sometimes be denoted collectively by $\Gamma$, linearised curvature components by $R$.

\begin{remark} \label{rem:smooth}
Because the above linearised quantities are defined with respect to a null frame which is not regular on the event horizon \cite{DHR}, the following \underline{weighted} linearised quantities can be shown to extend smoothly to the event horizon $\mathcal{H}^+$, cf.~in particular Sections 5.1.3 and 5.1.4 of \cite{DHR}:
\begin{gather}\label{regq1}
\glinh \, , \,   \glinto \, , \,  \bmlin \, , \,  \Olino \, , \, \otx \, , \,  \Omega^{-2} \otxb, \Omega \xlin \, , \,  \Omega^{-1} \xblin\, , \,  \elin\, , \,  \eblin\, , \,  \olin \, , \,  \Omega^{-2} \olinb \, , \,
\Omega^2 \alin\, , \,   \Omega \blin\, , \,  \rlin\, , \, \slin\, , \,  \Omega^{-1} \bblin\, , \,  \Omega^{-2} \ablin\, , \,  \Klin. 
\end{gather}\end{remark}
In the following subsections we collect the system of equations satisfied by the dynamical quantities introduced above. These are taken directly from \cite{DHR}, where they are also formally derived by explicit linearisation of the vacuum Einstein equations,
\[
Ric \left[\boldsymbol{g}\right] = 0 \, .
\]
Here we only collect the Schwarzschild background values appearing in the equations below:
\begin{align}
\Omega^2 = 1- \frac{2M}{r} \ \  ,  \ \  tr \chi = - tr \underline{\chi} = \frac{2}{r}\Omega \ \ , \ \ \omega=\Omega \hat{\omega} = -\underline{\omega} =-\Omega \hat{\underline{\omega}} = \frac{M}{r^2} \ \ , \ \ \rho = - \frac{2M}{r^3} \ \  ,  \ \ K = \frac{1}{r^2}. 
\end{align}
\subsection{Equations for the linearised metric components}
\label{=lmc}
%
The following equations hold for the linearised metric components, $\glinto \, , \, \glinh \, , \, \bmlin \, , \, \Olin$:
\begin{align} \label{stos} 
\partial_u \left(\frac{\glinto}{\sqrt{\slashed{g}}}\right)  = \otxb
\qquad , \qquad 
\partial_v \left(\frac{\glinto}{\sqrt{\slashed{g}}}\right) = \otx -  \slashed{div}\, \bmlin  \, ,
\end{align}
\begin{align} \label{stos2}
\sqrt{\slashed{g}}\, \partial_u \left( \frac{\glinh_{AB}}{\sqrt{\slashed{g}}} \right)   =2\Omega\, \xblin_{AB}
\ \ \ ,  \ \ \
\sqrt{\slashed{g}}\, \partial_v \left( \frac{\glinh_{AB}}{\sqrt{\slashed{g}}} \right) &=2\Omega\, \xlin_{AB} + 2\left(\slashed{\mathcal{D}}_2^\star \bmlin \right)_{AB}  ,
\end{align}
\begin{align} \label{bequat} 
\partial_u \bmlin^A = 2 \Omega^2\left(\elin^A - \eblin^A\right) \, ,
\end{align}
\begin{align} \label{oml3}
\partial_v \left( \Olin \right) = \olin \textrm{ \ , \ }  \partial_u \left(\Olin\right)=\olinb \textrm{ \  , \ }   2 \slashed{\nabla}_A \left(\Olin\right) = \elin_A + \eblin_A.
\end{align}

\subsection{Equations for the linearised Ricci coefficients}
\label{=lRc}
For $\otx \, , \, \otxb$ we have the equations
\begin{align} \label{dtcb}
\partial_v \otxb  = \Omega^2 \left( 2 \slashed{div}\, \eblin + 2\rlin + 4 \rho \, \Olin \right) - \frac{1}{2}  \Omega tr \chi \left( \otxb - \otx  \right) ,
\end{align}
\begin{align} \label{dbtc}
\partial_u \otx  = \Omega^2 \left( 2 \slashed{div}\, {\elin} + 2 \rlin + 4\rho \, \Olin \right) - \frac{1}{2}  \Omega tr \chi \left( \otxb - \otx  \right) ,
\end{align}
\begin{align} \label{uray}
\partial_v \otx = - \left(\Omega tr \chi\right)\otx + 2 \omega \otx  + 2  \left(\Omega tr \chi \right) \olin ,
\end{align}
\begin{align} \label{vray}
\partial_u \otxb = - \left(\Omega tr \underline{\chi}\right) \otxb  + 2 \underline{\omega} \otxb + 2  \left(\Omega tr \underline{\chi} \right) \olinb ,
\end{align}
while for $\xlin \, , \, \xblin$ we have
\begin{equation} \label{tchi} 
\begin{split}
\slashed{\nabla}_3  \left(\Omega^{-1} \xblin  \right)  +  \Omega^{-1} \left(tr \underline{\chi}\right) \xblin = -\Omega^{-1} \ablin \, , \\
\slashed{\nabla}_4  \left(\Omega^{-1} \xlin \right)  +  \Omega^{-1} \left(tr{\chi}\right) \xlin= -\Omega^{-1} \alin   \, ,
\end{split}
\end{equation}
\begin{align} \label{chih3}
\slashed{\nabla}_3  \left(\Omega \xlin \right)  + \frac{1}{2} \left(\Omega tr \underline{\chi}\right) \xlin + \frac{1}{2} \left( \Omega tr \chi\right) \xblin  &= -2 \Omega \slashed{\mathcal{D}}_2^\star \elin \, , \\
\slashed{\nabla}_4  \left(\Omega \xblin  \right) + \frac{1}{2} \left(\Omega tr \chi \right) \xblin  + \frac{1}{2} \left( \Omega tr \underline{\chi}\right) \xlin &= -2 \Omega \slashed{\mathcal{D}}_2^\star \eblin \, . \label{chih3b}
\end{align}
We also have the (purely elliptic) linearised Codazzi equations on the spheres $S^2_{u,v}$, which read
\begin{equation}
\begin{split}
\slashed{div} \xblin = -\frac{1}{2} \left(tr \underline{\chi}\right)  \elin + \bblin + \frac{1}{2\Omega} \slashed{\nabla}_A \otxb , \\
\slashed{div} \xlin = -\frac{1}{2} \left( tr {\chi}\right) \eblin  -\blin + \frac{1}{2\Omega} \slashed{\nabla}_A \otx \label{ellipchi} \, .
\end{split}
\end{equation}
For $\elin$ and $\eblin$ we have the transport equations
\begin{align} \label{propeta}
\slashed{\nabla}_3 \eblin =  \frac{1}{2} \left(tr \underline{\chi}\right) \left( \elin - \eblin\right)  + \bblin
\textrm{ \ \ \ \ , \ \ \ \ }
\slashed{\nabla}_4 \elin =  -  \frac{1}{2} \left( tr {\chi}\right) \left( \elin - \eblin\right) - \blin ,
\end{align}
together with the elliptic equations on the spheres $S^2_{u,v}$
\begin{align} \label{curleta}
\slashed{curl} \elin = \slin \ \ \ , \ \ \ \slashed{curl} \eblin = -\slin \, .
\end{align}
We finally have the transport equations for $\olin$ and $\olinb$ 
\begin{align} \label{oml1}
\partial_v \olinb = -\Omega \left(\rlin + 2 \rho \Olin \right) \, ,
\end{align}
\begin{align} \label{oml2}
\partial_u \olin = -\Omega \left(\rlin + 2 \rho \Olin \right) \, ,
\end{align}
and the linearised Gauss equation on the spheres $S^2_{u,v}$, which reads
\begin{equation} \label{lingauss}
\Klin = -\rlin - \frac{1}{4} \frac{tr {\chi}}{\Omega}\left( \otxb - \otx  \right) +\frac{1}{2}\Olin \left(tr \chi tr \underline{\chi} \right) \, .
\end{equation}

\subsection{Equations for linearised curvature components}
\label{=lcc}
We finally collect the equations satisfied by the linearised curvature components, which arise from the linearisation of the Bianchi equations:
\begin{align}
\slashed{\nabla}_3 \alin + \frac{1}{2} tr \underline{\chi}\alin + 2 \underline{\hat{\omega}} \alin &= -2 \slashed{\mathcal{D}}_2^\star \blin - 3 \rho\, \xlin \, ,  \label{Bianchi1} \\
\slashed{\nabla}_4 \blin + 2 (tr \chi) \blin - \hat{\omega} \blin &= \slashed{div}\, \alin \, , \label{Bianchi2} \\
\slashed{\nabla}_3 \blin + (tr \underline{\chi}) \blin + \underline{\hat{\omega}} \blin &= \slashed{\mathcal{D}}_1^\star \left(-\rlin \, , \, \slin \, \right) + 3\rho \, \elin \, ,   \label{Bianchi3}
\end{align}
\begin{align}
\slashed{\nabla}_4 \rlin + \frac{3}{2} (tr \chi) \rlin = \slashed{div}\, \blin - \frac{3}{2} \frac{\rho}{\Omega}  \otx \, , \label{Bianchi4}
\end{align}
\begin{align}
\slashed{\nabla}_3 \rlin + \frac{3}{2} (tr \underline{\chi}) \rlin = -\slashed{div}\, \bblin - \frac{3}{2} \frac{\rho}{\Omega} \otxb \, , \label{Bianchi5}
\end{align}
\begin{align}
\slashed{\nabla}_4 \slin + \frac{3}{2} (tr \chi) \slin&= -\slashed{curl}\, \blin \, , \label{Bianchi6} \\
\slashed{\nabla}_3 \slin + \frac{3}{2} (tr \underline{\chi}) \slin &= -\slashed{curl}\, \bblin \, , \label{Bianchi7} \\
\slashed{\nabla}_4 \bblin + (tr \chi)  \bblin + \hat{\omega} \bblin &= \slashed{\mathcal{D}}_1^\star \left(\rlin \, ,  \, \slin \, \right) - 3 \rho\, \eblin  \, ,  \label{Bianchi8} \\
\slashed{\nabla}_3 \bblin + 2 (tr \underline{\chi})  \bblin - \hat{\underline{\omega}} \bblin &= - \slashed{div}\, \ablin \, , \label{Bianchi9} \\
\slashed{\nabla}_4 \ablin + \frac{1}{2} (tr \chi) \ablin + 2 \hat{\omega} \ablin &=  2 \slashed{\mathcal{D}}_2^\star \bblin - 3 \rho\, \xblin \, .  \label{Bianchi10}
\end{align}

\subsection{A class of pure gauge solutions} \label{sec:pg}
There exist special solutions to the system of gravitational perturbations above which correspond to infinitesimal coordinate transformations preserving the double null form of the metric. These are called \emph{pure gauge solutions} of the system of gravitational perturbations. A particular subset of them is identified in the following Lemma, which is proven as Lemma 6.1.1 of \cite{DHR}. Recall the notation $\Delta_{S^2}=r^2 \slashed{\Delta}$, so $\Delta_{S^2}$ is the Laplacian on the unit sphere with metric $\gamma$.

\begin{lemma} \label{lem:exactsol} 
For any smooth function $f=f\left(v,\theta,\phi\right)$, the following is a pure gauge solution of the system of gravitational perturbations:
\begin{align}
2\Olin &= \frac{1}{\Omega^2} \partial_v \left(f \Omega^2\right)  , & \glinh&= - \frac{4}{r} r^2 \slashed{\mathcal{D}}_2^\star \slashed{\nabla}_A f  \ , &\frac{\glinto}{\sqrt{\slashed{g}}} &= \frac{2\Omega^2 f}{r} + \frac{2}{r} r^2 \slashed{\Delta}f  , \nonumber \\
\bmlin &= -2r^2 \slashed{\nabla}_A \left[ \partial_v \left(\frac{f}{r}\right)\right] , & \elin &= \frac{\Omega^2}{r^2} r \slashed{\nabla} f  , & \eblin &= \frac{1}{\Omega^2} r \slashed{\nabla} \left[\partial_v \left(\frac{\Omega^2}{r}f\right) \right]   \nonumber \, ,
\nonumber \\
\xblin &= -2\frac{\Omega}{r^2} r^2 \slashed{\mathcal{D}}_2^\star \slashed{\nabla} f  , & \otx &= 2 \partial_v \left(\frac{f \Omega^2}{r}\right)  , & \otxb &=  2\frac{\Omega^2}{r^2} \left[\Delta_{\mathbb{S}^2} f - f \left(1-2\Omega^2\right) \right]  , \nonumber \\
\rlin &= \frac{6M \Omega^2}{r^4} f  , & \bblin&= \frac{6M\Omega}{r^4}  r \slashed{\nabla} f , & \Klin &= -\frac{\Omega^2}{r^3}\left(\Delta_{\mathbb{S}^2} f + 2f\right) \nonumber
\end{align}
and
\[
\xlin = \alin = \ablin = 0 \ \ \ , \ \ \ \blin = 0 \ \ \ , \ \ \ \slin = 0 \nonumber \, .
\]
We will call $f$ a gauge function. 
\end{lemma}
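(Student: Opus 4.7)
The strategy is to realise the claimed ansatz as the linearisation of an infinitesimal diffeomorphism of the Schwarzschild background preserving the double null gauge. Since the gauge function $f$ depends only on $(v,\theta,\phi)$ and the ansatz satisfies $\xlin = \alin = \ablin = \blin = \slin = 0$, the natural candidate generator is a vector field of the schematic form $V_f = f\,\partial_u + W_f$, with the correction $W_f$ involving $\partial_v$- and angular-components and fixed by the requirement that the gauge conditions $g_{uu} = g_{vv} = 0$ be preserved to first order.

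I would first compute $\mathcal{L}_{V_f} g$ for the background $g = -4\Omega^2\,du\,dv + r^2\,\slashed{g}_{AB}\,d\theta^A d\theta^B$, using $\partial_u r = -\Omega^2$, $\partial_v r = \Omega^2$ and $\partial_u\Omega^2 = -\partial_v\Omega^2 = -2M\Omega^2/r^2$. The bare $V = f\,\partial_u$ generates a forbidden $(dv)^2$ component equal to $-4\Omega^2(\partial_v f)(dv)^2$; the role of $W_f$ is to cancel it via an infinitesimal reparametrisation of $v$ and the spheres, and this cancellation is precisely what produces the $\partial_v$ derivatives appearing in $\Olin$, $\bmlin$ and $\eblin$. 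Once $V_f$ is fixed, reading off the double null components of $\mathcal{L}_{V_f} g$ yields the stated expressions for $\glinto/\sqrt{\slashed{g}}$, $\glinh$, $\Olin$ and $\bmlin$.

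The linearised connection coefficients $\otx,\, \otxb,\, \xblin,\, \elin,\, \eblin,\, \olin,\, \olinb$ then follow by Lie-differentiating the background values $\mathrm{tr}\,\chi = 2\Omega/r$, $\omega = M/r^2$, and so on along $V_f$, with additional contributions from the frame rotation encoded by $\Olin$. The vanishings $\xlin = 0$, $\blin = 0$, $\slin = 0$, $\alin = \ablin = 0$ emerge from the computation as reflections of the type-D character of Schwarzschild and the specific form of $V_f$. The nontrivial $\rlin, \bblin, \Klin$ arise from $\mathcal{L}_{V_f}$ acting on the background scalars $\rho = -2M/r^3$ and $K = 1/r^2$, combined with the frame-rotation contributions; in particular, the prefactor $6M\Omega^2/r^4$ in $\rlin$ reflects $\partial_u\rho = -6M\Omega^2/r^4$ together with a compensating $-2\Olin\,\rho$ frame correction.

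The principal obstacle is bookkeeping rather than conceptual: identifying the precise $W_f$ and tracking all the $\Omega$- and $r$-weights it induces in the null-frame components of every $\Gamma$- and $R$-quantity. An equally valid alternative, which bypasses the diffeomorphism picture entirely, is to substitute the ansatz directly into the equations of Sections~\ref{=lmc}--\ref{=lcc} and verify them equation by equation; this reduces to a finite collection of routine identities on Schwarzschild, and a spot check on \eqref{stos}, \eqref{oml3} and the linearised Gauss equation \eqref{lingauss} already provides strong consistency.
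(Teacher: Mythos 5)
The paper itself offers no proof of this lemma: it is quoted verbatim from Lemma~6.1.1 of \cite{DHR}, where it is established essentially along the lines you propose, namely by exhibiting the generating vector field of an infinitesimal diffeomorphism preserving the double null form and reading off every linearised quantity. The problem is that your candidate generator is wrong, and in a way that no correction $W_f$ can repair. For a general $\xi = \xi^u\partial_u + \xi^v\partial_v + \xi^A\partial_A$ one has $(\mathcal{L}_\xi g)_{vv} = 2g_{v\mu}\partial_v\xi^\mu = -4\Omega^2\,\partial_v\xi^u$, which depends \emph{only} on the $\partial_u$-component of $\xi$: neither a reparametrisation of $v$ nor a sphere diffeomorphism contributes to $g_{vv}$ at linear order about Schwarzschild (where the shift vanishes). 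So the ``forbidden $(dv)^2$ component'' $-4\Omega^2(\partial_vf)(dv)^2$ that you correctly compute for $V=f\partial_u$ is an obstruction, not something to be cancelled: preserving the double null gauge forces $\xi^u$ to be independent of $v$. The correct generator is instead $\xi = f\,\partial_v + \tfrac{2}{r}\gamma^{AB}\partial_Bf\,\partial_A$ (with $\gamma$ the unit round metric). Indeed, comparing the general identity $2\Olin = \Omega^{-2}\left[\partial_u(\xi^u\Omega^2)+\partial_v(\xi^v\Omega^2)\right]$ with the stated $2\Olin=\Omega^{-2}\partial_v(f\Omega^2)$ forces $\xi^v=f$ and $\xi^u=0$, and then $(\mathcal{L}_\xi g)_{uA}=r^2\gamma_{AB}\partial_u\xi^B-2\Omega^2\partial_Af=0$ fixes the angular part via $\partial_u(1/r)=\Omega^2/r^2$; one checks that this $\xi$ reproduces $\glinh$, $\bmlin$ and, for instance, $\rlin=f\,\partial_v\rho=+6M\Omega^2f/r^4$ on the nose. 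Note the sign: your heuristic $\partial_u\rho=-6M\Omega^2/r^4$ ``plus a $-2\Olin\rho$ frame correction'' is a symptom of the misidentified generator, since the correction would have to contribute $+12M\Omega^2f/r^4$ with no $\partial_vf$ term, which $-2\Olin\rho=\tfrac{2M}{r^3}\left(\partial_vf+\tfrac{2M}{r^2}f\right)$ does not.

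Your fallback route---substituting the ansatz into all the equations of Sections~\ref{=lmc}--\ref{=lcc}---is sound and is the honest way to verify that the stated collection is a \emph{solution} of the linearised system; but a spot check of \eqref{stos}, \eqref{oml3} and \eqref{lingauss} is not a proof, and even a complete verification of every equation would not by itself establish that the solution is \emph{pure gauge}, i.e.\ that it arises from an infinitesimal diffeomorphism preserving the double null form. That part of the statement genuinely requires exhibiting the vector field, so the first route must be carried out, with the generator corrected as above.
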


\section{The class of solutions} \label{sec:cls}
In \cite{DHR} we discussed the characteristic initial value problem for the above system of gravitational perturbations.  In particular, a notion of smooth characteristic seed initial data was defined to which a unique smooth solution of the system was associated, cf.~Theorem 8.1 of \cite{DHR}. We also defined a notion of asymptotically flat seed initial data. When we talk about a solution $\mathscr{S}$ below, we will always mean a smooth solution arising from the characteristic future initial value problem posed on two null cones
\[
C_{u_0} =\{ u_0 \} \times \{v \geq v_0\} \times S^2  \ \ \ \textrm{and} \ \ \ C_{v_0} =\{ u \geq u_0 \} \times \{v \} \times S^2
\]
as illustrated by the figure in the introduction. The restriction to smooth is of course not needed but convenient in the considerations below.

In \cite{DHR} we exhibited a $4$-parameter family of explicit solutions to the system of gravitational perturbations, the linearised Kerr solutions, denoted collectively by $\mathscr{K}$. These solutions are supported only on the spherical harmonics $\ell=0,1$. We proved in Theorem 9.2 of \cite{DHR} that any solution of the system supported only on the harmonics $\ell=0,1$ is equal to the sum of (a member of) $\mathscr{K}$ and an (explicit) pure gauge solution. This fact allows one to restrict to solutions supported on $\ell\geq 2$:

\begin{definition}
Let $\mathscr{S}$ be a solution of the system of gravitational perturbations. We say that $\mathscr{S}$ is supported on $\ell \geq 2$, if any scalar quantity of $\mathscr{S}$ has vanishing projection to the $\ell=0$ and $\ell=1$ spherical harmonics (see \cite{DHR}) and if any one-form $\xi$ of $\mathscr{S}$ satisfies that the scalars $\slashed{div} \xi$ and $\slashed{curl} \xi$ have vanishing projection to the $\ell=1$ spherical harmonic.
\end{definition}

Adding pure gauge solutions $\mathscr{G}$ to a given solution $\mathscr{S}$ can moreover be used to achieve certain gauge conditions on the initial data (``normalise the data"). We recall them below. Compared with \cite{DHR} we distinguish here between ``partially" initial data normalised and ``fully" initial data normalised solutions supported on $\ell \geq 2$. This is merely to state the theorem of this paper with minimal assumptions.


\begin{definition}
We call $\mathscr{S}$ a {\bf partially initial data normalised solution supported on $\ell \geq 2$} of the system of gravitational perturbations if $\mathscr{S}$ is supported on $\ell \geq 2$ and the initial data satisfies
\begin{enumerate}
\item The horizon gauge conditions, i.e.
\begin{align} \label{hgc}
\otx \left(\infty,v_0,\theta,\phi\right)=0  \ \ \ \textrm{and} \ \ \ \left(\slashed{div} \elin + \rlin \right) \left(\infty,v_0,\theta,\phi\right)=0 \, . 
\end{align} 
\item The basic round sphere condition at infinity, i.e.~the linearised Gauss-curvature satisfies
\begin{align}
\lim_{v \rightarrow \infty} r^2 \Klin \left(u_0,v,\theta,\phi\right) = 0  \ \ \ \  \textrm{along the null hypersurface $C_{u_0}$.} \label{rsc}
\end{align}
\end{enumerate}
\end{definition}

Note that both horizon gauge conditions are evolutionary, i.e.~a solution $\mathscr{S}$ satisfying (\ref{hgc}) satisfies $\otx=0$ and $\slashed{div} \elin + \rlin=0$ along \emph{all} of $\mathcal{H}^+$. This follows directly from the transport equations for these quantities along $\mathcal{H}^+$. Similarly, the round sphere condition can be seen to be evolutionary, i.e.~$r^2K^{(1)}$ vanishes as $v\rightarrow \infty$ along any cone $C_u$ with $u_0 \leq u<\infty$. See Proposition 9.4.1 and Corollary A.1 of \cite{DHR}.

\begin{definition}
We call $\mathscr{S}$ a {\bf fully initial data normalised solution supported on $\ell \geq 2$} if it is partially initial data normalised supported on $\ell \geq 2$ and if the initial data of $\mathscr{S}$ satisfy in addition
\begin{enumerate}
\item The second round sphere condition at infinity, i.e.
\begin{align}
\lim_{v \rightarrow \infty} r^2 \slashed{\mathcal{D}}_2^\star \slashed{\mathcal{D}}_2 \glinh\left(u_0,v,\theta,\phi\right) = 0  \ \ \ \  \textrm{holds along the null hypersurface $C_{u_0}$.} \label{rscm}
\end{align}
\item The lapse and shift gauge condition, i.e.
\begin{align} \label{lapsegc}
\Olin = w\left(\theta,\phi\right) \ \ \ \ \textrm{holds along both $C_{u_0}$ and $C_{v_0}$}
\end{align}
for a smooth function $w\left(\theta,\phi\right)$ on the unit sphere which has vanishing projection to $\ell=0,1$, 
and
\[
\bmlin=0 \textrm{ \ \ \ \  holds along $C_{u_0}$.}
\]
\end{enumerate}
\end{definition}
Partially and fully initial data normalised solutions supported on $\ell \geq 2$ will typically be denoted by $\Si^\prime$ consistent with the notation in \cite{DHR}.

Below we also want to take certain limits of the solution as $v \rightarrow \infty$ for fixed $u$ (or a fixed bounded subset of $u$ values). For this, the following definition is convenient:
\begin{definition} \label{def:extendskri}
We call a solution $\mathscr{S}$ {\bf extendible to null infinity} if the following weighted quantities of $\mathscr{S}$ have well-defined finite limits on null infinity\footnote{In particular, for every $u \geq u_0$ fixed the limit of the quantity along the null cone $C_{u}$ as $v \rightarrow \infty$ is well-defined.} for some $0<s<1$
\begin{equation}\label{decreten} 
\begin{split}
r^{3+s} \alin \ , \ r^{3+s}\blin \ , \  r^3\rlin  , \  r^3 \slin  , \  r^2 \bblin \ , \  r\ablin \ , \ r^3 \Klin \, ,  \\
 r^2 \xlin \ , \  r\xblin \ , \ r \elin \ , \  r^2 \eblin \ , \  r^2 \slashed{div} \elin \, , \, r^3 \slashed{div} \eblin  \ , \  r^{2+s} \olin \ , \  \olinb \ , \   r^2 \otx  \ , \  r \otxb  \ , \ \Olin  \, .
\end{split}
\end{equation}
In addition, denoting an arbitrary representative of the quantities in (\ref{decreten}) by $\mathcal{Q}$, for any fixed $u_{fin}$ with $u_0<u_{fin}<\infty$ the estimate
\begin{align}
\sup_{\left[u_0,u_{fin}\right] \times \{ v\geq v_0\} \times S^2} |\mathcal{Q}| \leq C\left[u_{fin}\right]
\end{align}
holds with the constant $C\left[u_{fin}\right]$ depending only on $u_{fin}$ (and the initial data) but not on $v$.
\end{definition}

In Theorems 9.1 and 9.2 of \cite{DHR} and Theorem A.1 of the appendix of \cite{DHR} we proved the following statement, which expresses the fact that there is no restriction in considering {\bf fully initial data normalised solutions supported on $\ell \geq 2$ which are extendible to null infinity}:

\begin{theorem*}[\cite{DHR}]
Given \underline{any} asymptotically flat to order $n\geq 12$ smooth characteristic seed initial data set (see Definition 8.2 of \cite{DHR}) with corresponding solution $\mathscr{S}$, we can construct a pure gauge solution $\mathscr{G}$ of the system of gravitational perturbations and a linearised Kerr solution $\mathscr{K}$, both explicitly computable and controllable from the seed data, such that $\Si^\prime=\mathscr{S}+\mathscr{G}-\mathscr{K}$ is a fully initial data normalised solution supported on $\ell \geq 2$ which is extendible to null infinity.
\end{theorem*}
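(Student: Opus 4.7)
The plan is to construct the required $\mathscr{G}$ and $\mathscr{K}$ in three stages: (i) remove the $\ell=0,1$ spherical harmonic content of $\mathscr{S}$ by a linearised Kerr solution plus an explicit pure gauge solution, (ii) adjust the resulting $\ell\geq 2$ solution by further pure gauge solutions to enforce the horizon conditions \eqref{hgc}, the round sphere conditions \eqref{rsc}, \eqref{rscm}, and the lapse and shift conditions \eqref{lapsegc}, and (iii) verify extendibility to null infinity by propagating the initial $r$-decay of the seed through the transport and elliptic equations of Sections \ref{=lmc}--\ref{=lcc}. Stage (i) is exactly Theorem 9.2 of \cite{DHR}: the $\ell\leq 1$ projection of $\mathscr{S}$ decomposes uniquely into a linearised Kerr solution $\mathscr{K}$ with four parameters determined by the seed and an explicit pure gauge piece $\mathscr{G}_0$, all controllable from the data.

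For stage (ii) I would use two complementary families of pure gauge solutions: the $f(v,\theta,\phi)$-family of Lemma \ref{lem:exactsol} and its analogue $\underline{f}(u,\theta,\phi)$-family obtained by swapping the roles of $u,v$ (cf.~the parallel Lemma in \cite{DHR}). The horizon conditions $\otx(\infty,v_0,\cdot)=0$ and $(\slashed{div}\,\elin+\rlin)(\infty,v_0,\cdot)=0$, upon inserting the transformation laws of these quantities on $\mathcal{H}^+\cap\{v=v_0\}$, reduce to two elliptic equations on $S^2$ for the restrictions of $\underline{f}$ and its $v$-derivatives there; these are invertible on the $\ell\geq 2$ subspace because the relevant operators are $\slashed{\Delta}$ shifted by constants coming from the Schwarzschild background values. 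The two round sphere conditions \eqref{rsc}, \eqref{rscm} along $C_{u_0}$ then translate, via the explicit formulas for $\Klin$ and $\glinh$ in Lemma \ref{lem:exactsol}, into transport equations in $v$ for $f$ along $C_{u_0}$ whose right hand sides involve the $v\to\infty$ limits of specific weighted curvature and metric quantities of $\mathscr{S}+\mathscr{G}_0-\mathscr{K}$; the asymptotic flatness of order $n\geq 12$ ensures convergence of the integrals that determine $f$. The lapse and shift conditions are finally imposed using the residual freedom in $f$ and $\underline{f}$ on the two initial cones: since $\Olin$ transforms by $\partial_v(f\Omega^2)/(2\Omega^2)$ along $C_{u_0}$ and by the analogous $\underline{f}$ expression along $C_{v_0}$, matching $\Olin$ to a prescribed angular-only $w(\theta,\phi)$ is a pair of transport ODEs, and $\bmlin=0$ on $C_{u_0}$ fixes the remaining angular integration constant. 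The ordering is chosen so each adjustment preserves the earlier ones; crucially, the horizon conditions are evolutionary along $\mathcal{H}^+$, so imposing them on the sphere $\mathcal{H}^+\cap\{v=v_0\}$ suffices.

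For stage (iii), once the gauge-adjusted solution $\Si^\prime$ is in hand, extendibility is verified inductively. Starting from the seed's decay in $r$, one bounds the weighted curvature components in \eqref{decreten} by integrating the linearised Bianchi equations of Section \ref{=lcc} in $v$ along $C_u$ with integrating factors built out of powers of $r$ and $\Omega$; then the weighted Ricci coefficients follow from the transport equations of Section \ref{=lRc} together with the elliptic Codazzi \eqref{ellipchi}, $\slashed{curl}$ equations \eqref{curleta}, and Gauss equation \eqref{lingauss}; the metric coefficients follow last from \eqref{stos}--\eqref{oml3}. The sup bound on $[u_0,u_{\mathrm{fin}}]\times\{v\geq v_0\}\times S^2$ is then obtained by integrating the remaining $u$-transport equations on a bounded $u$-interval, after the $v\to\infty$ limits have been established.

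The main obstacle is stage (ii): choosing $f$ and $\underline{f}$ so that all the normalising conditions are realised simultaneously and so that the corresponding pure gauge solutions are themselves extendible to null infinity and controllable in terms of the seed. In particular, each step of gauge solving loses a fixed number of derivatives and of powers of $r$-weight, and the threshold $n\geq 12$ on the order of asymptotic flatness in Definition 8.2 of \cite{DHR} is precisely the quantitative cost of iterating these losses through the four normalising conditions while still leaving enough regularity to guarantee the limits required by Definition \ref{def:extendskri}.
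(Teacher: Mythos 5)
This statement is not proven in the present paper at all: it is imported verbatim from \cite{DHR}, where it is the content of Theorems 9.1, 9.2 and A.1. Your three-stage decomposition (remove $\ell\leq 1$ via a linearised Kerr plus pure gauge solution; normalise the data by further pure gauge solutions generated by $f(v,\theta,\phi)$ and $\underline{f}(u,\theta,\phi)$; propagate the seed's $r$-decay through the Bianchi, transport and elliptic equations to get extendibility) does mirror exactly how the paper attributes the result, so the architecture of your argument is the right one. But as a proof it remains a scheme, and the places where you wave your hands are precisely the places where the work lies.

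The most concrete gap is in your stage (ii) treatment of the horizon gauge conditions \eqref{hgc}. For the $f(v,\theta,\phi)$-family of Lemma \ref{lem:exactsol} one has $\otx_{\mathscr{G}}=2\partial_v(f\Omega^2/r)$, and since $\partial_v r=\Omega^2$ and $\partial_v\Omega^2=\tfrac{2M}{r^2}\Omega^2$ both vanish on $\mathcal{H}^+$, this contribution vanishes identically on the horizon for any bounded $f$; the same $\Omega^2$ prefactor appears in the $\underline{f}$-analogue of $\otx$. So neither family, with \emph{bounded} gauge functions, can move $\otx$ on the sphere $S^2_{\infty,v_0}$ at all, and your ``two elliptic equations on $S^2$, invertible on $\ell\geq 2$'' cannot be set up as stated. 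What actually makes this work in \cite{DHR} is that the gauge function toward the horizon must be allowed to grow like $\Omega^{-2}$ (equivalently, it must be regular as a function of the Kruskal coordinate $U$), so that $\Omega^2\underline{f}$ has a finite nonzero limit on $\mathcal{H}^+$; this is the same mechanism alluded to in the remark after Lemma \ref{lem:newval} about taking $u_{fin}\to\infty$ in \eqref{choicef}. Without addressing this weighting, and without verifying that the resulting singular-looking gauge solution is still an admissible, controllable solution of the linearised system, stage (ii) does not close. The remaining assertions --- that the four normalising conditions can be imposed in an order that preserves each other, that the gauge solutions so constructed are themselves extendible to null infinity, and that the loss count lands exactly at $n\geq 12$ --- are plausible but are exactly the quantitative content of Theorems 9.1 and A.1 of \cite{DHR}, which you assume rather than establish.
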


Our distinction between partially and fully initial data normalised above emphasises that we will only exploit the validity of a subset of the gauge conditions of \cite{DHR} on the solution to obtain our main theorem. We note in particular that both the initial data normalised solution $\Si^\prime$ and the horizon-renormalised solution $\Sf^\prime$ of \cite{DHR} are partially initial data normalised in the language of this paper.

We end this section with the following simple observation (see the quantity $\Zlin$ in \cite{DHR}):

\begin{proposition} \label{prop:idc}
Consider $\Si^\prime$ a partially initial data normalised solution supported on $\ell \geq 2$. As $u \rightarrow \infty$ along the initial cone $C_{v_0}$ we have
\[
r \otx -4 \Omega^2 \Olin \left(v_0,u,\theta,\phi\right) =  \mathcal{O}\left(\Omega^4\right) 
\]
and also the angular commuted version
\[
r^2 \slashed{\Delta} \left( r \otx -4 \Omega^2 \Olin \left(v_0,u,\theta,\phi\right)\right) =  \mathcal{O}\left(\Omega^4\right) \, .
\]
\end{proposition}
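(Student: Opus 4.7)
The plan is to define $\Psi := r\otx - 4\Omega^2\Olin$, derive an evolution equation $\partial_u \Psi = \mathcal{O}(\Omega^4)$ along $C_{v_0}$, verify that $\Psi$ vanishes in the limit $u \to \infty$, and then integrate back from the horizon. Since $\Omega^2 \sim e^{-u/2M}$ decays exponentially in $u$ near the horizon, an estimate of the form $\partial_u \Psi = \mathcal{O}(\Omega^4)$ integrates up to the same order, giving the claim.

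For the evolution identity, I would substitute (\ref{dbtc}) for $\partial_u \otx$ and (\ref{oml3}) for $\partial_u \Olin = \olinb$, and use the Schwarzschild background values $\partial_u r = -\Omega^2$, $\partial_u \Omega^2 = -2M\Omega^2/r^2$, $\Omega tr \chi = 2\Omega^2/r$, and $\rho = -2M/r^3$. The key algebraic observation is that $4r\rho = -8M/r^2$, so the $4r\rho\,\Omega^2\Olin$ term produced by $r\cdot \partial_u \otx$ exactly cancels the background contribution $(8M/r^2)\Omega^2\Olin$ coming from $-4(\partial_u\Omega^2)\Olin$. Crucially, one must notice that $\Omega tr \chi$ is $\mathcal{O}(\Omega^2)$ rather than $\mathcal{O}(\Omega)$, which makes the $-\tfrac{1}{2}\Omega tr\chi(\otxb - \otx)$ term contribute only at $\mathcal{O}(\Omega^4)$ -- without this observation the leading-order cancellation would be destroyed. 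After collecting terms one obtains the clean identity
\[
\partial_u \Psi \,=\, \Omega^2 \bigl[\, 2r(\slashed{div}\, \elin + \rlin) \,-\, \otxb \,-\, 4\olinb \,\bigr].
\]

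Next I would control the bracket using the smoothness assertions of Remark \ref{rem:smooth} together with the horizon gauge conditions. The quantities $\slashed{div}\elin + \rlin$, $\Omega^{-2}\otxb$, and $\Omega^{-2}\olinb$ all extend smoothly to $\mathcal{H}^+$; the second gauge condition in (\ref{hgc}) kills $\slashed{div}\elin + \rlin$ at $(\infty, v_0, \theta, \phi)$, so by smoothness it is $\mathcal{O}(\Omega^2)$ along $C_{v_0}$ near $u = \infty$, while $\otxb$ and $\olinb$ are automatically $\mathcal{O}(\Omega^2)$. Therefore the bracket is $\mathcal{O}(\Omega^2)$ and $\partial_u \Psi = \mathcal{O}(\Omega^4)$. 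At $u = \infty$ the first gauge condition in (\ref{hgc}) gives $\otx = 0$, and smoothness of $\Olino$ forces $\Omega^2\Olin = \Omega \Olino \to 0$, so $\Psi|_{u=\infty} = 0$. Integrating along $C_{v_0}$ from $u$ to $\infty$ and using $\int_u^\infty \Omega^4(u')\, du' \lesssim \Omega^4(u)$ closes the argument for the first claim.

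For the angular-commuted statement, the operator $r^2 \slashed{\Delta}$ is the Laplacian on the unit $2$-sphere and commutes with $\partial_u$ and with the angle-independent background multipliers. The gauge conditions in (\ref{hgc}), being pointwise identities on the horizon sphere, are preserved under $\slashed{\Delta}$, and the smoothness in Remark \ref{rem:smooth} persists under angular differentiation. Thus the identical three-step argument, applied to $r^2\slashed{\Delta}\Psi$ instead of $\Psi$, yields $r^2\slashed{\Delta}(r\otx - 4\Omega^2\Olin) = \mathcal{O}(\Omega^4)$. The only real obstacle is bookkeeping in the first identity -- verifying that the $\rho$, $\Omega tr\chi$ and $\partial_u\Omega^2$ contributions combine to leave no surviving $\mathcal{O}(\Omega^3)$ term; once the clean form of $\partial_u \Psi$ is in hand, everything else is routine.
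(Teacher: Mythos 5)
Your proposal is correct and follows essentially the same route as the paper: compute $\partial_u(r\otx-4\Omega^2\Olin)$ from (\ref{dbtc}) and (\ref{oml3}), observe the exact cancellation of the $\frac{8M}{r^2}\Omega^2\Olin$ terms, bound the remainder by $\mathcal{O}(\Omega^4)$ via the horizon gauge conditions and Remark \ref{rem:smooth}, note the quantity vanishes at $u=\infty$, and integrate back (the commuted version following identically). Your explicit identity $\partial_u\Psi=\Omega^2\bigl[2r(\slashed{div}\,\elin+\rlin)-\otxb-4\olinb\bigr]$ checks out and is in fact a cleaner statement than the paper's displayed equation, which contains a typo ($4M\Omega^2\Olin$ in place of $4\Omega^2\Olin$).
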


\begin{proof}
Equations (\ref{dbtc}), (\ref{oml2}) along $C_{v_0}$ yield
\[
\partial_u \left( r\otx - 4M \Omega^2 \Olin\right) = -\frac{8M}{r^2} \Omega^2 \Olin  +\frac{8M}{r^2} \Omega^2 \Olin +  \mathcal{O}\left(\Omega^4\right)  \, ,
\]
where we have used the horizon gauge conditions and the smoothness of the solution at the horizon (cf.~Remark \ref{rem:smooth}). By the horizon gauge condition (\ref{hgc}), the quantity in brackets vanishes on the horizon and hence integration yields the desired result. The second claim follows by trivial commutation and the smoothness of the solution.
\end{proof}

\section{The basic conservation law} \label{sec:bcl}
In this section we define, for a smooth solution of the system of gravitational perturbations, basic energy fluxes on null hypersurfaces (Section \ref{sec:eflux}). For $\mathscr{S}$ a partially initial data normalised solution supported on $\ell \geq 2$ that is extendible to null infinity we explicitly obtain the limiting fluxes on the horizon and null infinity. Finally, we state and prove a conservation law relating the fluxes (Section \ref{sec:cl}). 

\subsection{The energy fluxes} \label{sec:eflux}

Let $\mathscr{S}$ be a smooth solution of the system of gravitational perturbations (cf.~Section \ref{sec:cls}). For any $u_0\leq u_1<u_2 \leq \infty$ and $v_0\leq v_1 < v_2 \leq \infty$
let us define the fluxes
\begin{align} \label{vflux}
F_v \left[\Gamma , \mathscr{S} \right] \left(u_1,u_2\right)= \int_{u_1}^{u_2} d u \int_{S^2} d\theta d\phi \ r^2 \sin \theta \Bigg[ -2 \olinb \otx - \frac{1}{2} \left( \otxb \right)^2 \nonumber \\
- \frac{4M}{r^2} \left(\Olin\right)\otxb +2\Omega^2 |\elin|^2 + \Omega^2 |\xblin|^2 \Bigg]
\end{align}
and
\begin{align} \label{uflux}
F_u \left[\Gamma, \mathscr{S} \right] \left(v_1,v_2\right)= \int_{v_1}^{v_2} d v \int_{S^2} d\theta d\phi \ r^2 \sin \theta \Bigg[ -2\olin \otxb - \frac{1}{2} \left( \otx \right)^2 \nonumber \\
+ \frac{4M}{r^2} \left(\Olin\right) \otx +2\Omega^2 |\eblin|^2 + \Omega^2 |\xlin|^2 \Bigg] \, .
\end{align}
Note that the flux (\ref{vflux}) remains well defined at the horizon, i.e.~in the limit $u_2 \rightarrow \infty$ by the smoothness of the solution at the horizon. Similarly, for a solution that extends to null infinity (cf.~Definition \ref{def:extendskri}) the flux (\ref{uflux}) remains well defined as $v_2 \rightarrow \infty$ for any fixed $u$. 

\subsubsection{The flux on the horizon}
If the solution $\mathscr{S}=\Si^\prime$ is partially initial data normalised supported on $\ell \geq 2$ we have for any fixed $v_0 \leq v_1<v<\infty$ the horizon limit
\begin{align} \label{aq}
F_{\infty}\left[\Gamma, \Si^\prime \right] \left(v_1,v\right) = \lim_{u \rightarrow \infty} F_{u}\left[\Gamma, \Si^\prime \right] \left(v_1,v\right) = \int_{v_1}^v d v \int_{S^2} d\theta d\phi \ r^2 \sin \theta \left[ \Omega^2 | \xlin|^2 \right] ,
\end{align}
since $\otx=0$ on $\mathcal{H}^+$ and the quantities $\Omega^{-2}\otxb$ and $\eblin$ are regular on the event horizon. In fact, one easily sees that the validity of the horizon gauge condition $\otx=0$ on $S^2_{\infty,v_0}$ alone for a general $\mathscr{S}$ is sufficient for concluding the limit (\ref{aq}).

\subsubsection{The flux on null infinity} \label{sec:ni}
To investigate the limiting flux on null infinity, we consider $\mathscr{S}=\Si^\prime$ a partially initial data normalised solution supported on $\ell \geq 2$ which extends to null infinity (cf.~Definition \ref{def:extendskri}). Let us fix $u_0 \leq u_1<u_2<\infty$ and a large $v$ which we will eventually send to infinity. Note that we have $v\sim r$  in $\mathcal{M} \cap \{v\geq v_0\} \cap \{u_1 \leq u \leq u_2\}$ with the constant implicit in $\sim$ depending on $u_1,u_2,v_0$.
From (\ref{lingauss}) we have
\[
r\otxb = -4 \Olin + \mathcal{O}\left(r^{-1}\right) \ \ \ \  \textrm{along any cone $C_{u}$ with $u<\infty$.}
\]

We now observe that the third term in the integrand of (\ref{vflux}) will vanish in the limit $v\rightarrow \infty$ while the first, second and fourth  will combine to a pure boundary term. The details are as follows:
\begin{align}
-2 \int_{u_1}^{u_2} d u  \int_{S^2} d\theta d\phi \ & r^2 \sin \theta \ \olinb \otx = -2\int_{S^2} \sin \theta d\theta d\phi r^2 \Olin  \otx \Bigg|^{u_2}_{u_1} \nonumber \\
&+2 \int_{u_1}^{u_2} d u \int_{S^2} d\theta d\phi \ r^2 \sin \theta \ \Olin \left(2 \slashed{div} \elin - \frac{1}{r}\otxb + \mathcal{O} \left(r^{-3}\right)\right) \, , \nonumber
\end{align}
where we have used the transport equation (\ref{dbtc}) and recalled $\otx \sim r^{-2}$ from Definition \ref{def:extendskri}.
By the same definition $\eblin \sim r^{-2}$ and hence we see that in the limit
\begin{align} \label{niflux}
\lim_{v \rightarrow \infty} F_{v} \left[\Gamma, \Si^\prime \right] \left(u_1,u_2\right) = & \lim_{v \rightarrow \infty} \int_{u_1}^{u_2} d u \int_{S^2} d\theta d\phi \ r^2 \sin \theta   |\xblin|^2 \left(u,v\right) \nonumber \\
+ &\lim_{v\rightarrow \infty} \frac{1}{2}\int_{S^2} \sin \theta d\theta d\phi r^3 \otxb \otx  \left(u,v\right)\Bigg|^{u_2}_{u_1}  \, .
\end{align}
This is reminiscent of the Bondi mass loss formula, with the first term on the right hand side representing the flux of gravitational energy between retarded times $u_1$ and $u_2$. We summarise the above as
\begin{proposition} \label{prop:nullit}
Let $\Si^\prime$ be a partially initial data normalised solution supported on $\ell \geq 2$ that extends to null infinity (Definition \ref{def:extendskri}). Then for $u_0 \leq u_1<u_2<\infty$ the flux (\ref{vflux}) satisfies (\ref{niflux}) in the limit on null infinity.
\end{proposition}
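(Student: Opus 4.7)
The plan is to control the limit of \eqref{vflux} term-by-term, discarding the terms that decay thanks to Definition \ref{def:extendskri}, then producing the two needed cancellations by an integration by parts in $u$ and an integration by parts on $S^2$, with the Gauss equation feeding in the crucial algebraic identification $r\otxb \to -4\Olin$. First I would read off from \eqref{decreten} that, on the region $[u_1,u_2]\times\{v\geq v_0\}\times S^2$, each of the five summands in the integrand of \eqref{vflux}, multiplied by the Jacobian $r^2$, is pointwise bounded uniformly in $v$. The only summand that already vanishes pointwise is $-\tfrac{4M}{r^2}\Olin\,\otxb$, whose integrand is $O(r^{-1})$, so it drops from the limit outright. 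The quantitative bound $C[u_{\mathrm{fin}}]$ in Definition \ref{def:extendskri} is what makes all subsequent $o(1)$ terms uniform in $u\in[u_1,u_2]$ and therefore legitimate under the integral.

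Next I would use $\olinb=\partial_u\Olin$ from \eqref{oml3} to rewrite $-2\olinb\otx$ and integrate by parts in $u$. This gives a clean boundary term $-2[\int_{S^2}r^2\,\Olin\,\otx]_{u_1}^{u_2}$, an interior term $2\int r^2\,\Olin\,\partial_u\otx$, and an order-$r$ correction from $\partial_u(r^2)=-2r\Omega^2$ which, paired with $\Olin\,\otx=O(r^{-2})$, yields $O(r^{-1})$ and drops. Substituting \eqref{dbtc} and the background values produces $\partial_u\otx=2\slashed{div}\elin-\tfrac{1}{r}\otxb+O(r^{-3})$ asymptotically, so the interior term becomes $4\int r^2\,\Olin\,\slashed{div}\elin\,du-2\int r\,\Olin\,\otxb\,du + o(1)$.

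The central input is the linearised Gauss equation \eqref{lingauss}: substituting the background values $tr\chi/\Omega = 2/r$ and $tr\chi\,tr\underline\chi=-4\Omega^2/r^2$ and multiplying by $r^2$ yields $r^2\Klin = -r^2\rlin -\tfrac{r}{2}(\otxb-\otx)-2\Omega^2\Olin$. Using the evolutionary form of the round sphere condition \eqref{rsc} (which, as recalled in the text, propagates from $C_{u_0}$ to every $C_u$ with $u<\infty$), the left-hand side vanishes and $r\otxb\to -4\Olin$ in the limit. With this identity, the two integrals $-2\int r\,\Olin\,\otxb$ and $-\tfrac12\int r^2(\otxb)^2$ converge to $+8\int(\Olin)^2$ and $-8\int(\Olin)^2$ respectively and cancel. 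For $4\int r^2\,\Olin\,\slashed{div}\elin$ I would integrate by parts on $S^2$ and use $2\slashed\nabla\Olin=\elin+\eblin$ from \eqref{oml3}; this gives $-2\int r^2|\elin|^2-2\int r^2\,\elin\cdot\eblin$, where the second integral is $O(r^{-1})$ since $r^2\eblin$ is bounded while $\elin=O(r^{-1})$, and the first exactly cancels the $+2\int r^2|\elin|^2$ already present in \eqref{vflux}.

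What is left is precisely $\int r^2|\xblin|^2$ and the boundary contribution $-2[r^2\,\Olin\,\otx]_{u_1}^{u_2}$, which using $r\otxb\to -4\Olin$ rewrites as $\tfrac12[r^3\,\otxb\,\otx]_{u_1}^{u_2}$, matching \eqref{niflux}. I expect the main obstacle to be the propagation statement $r^2\Klin\to 0$ along every $C_u$, $u_0\leq u<\infty$, rather than only along the initial cone where \eqref{rsc} is imposed; this relies on the evolutionary character of the round sphere condition recorded in \cite{DHR}. The rest amounts to bookkeeping of decay rates against the uniform bounds supplied by Definition \ref{def:extendskri}, together with elementary integrations by parts whose boundary terms on $S^2$ vanish.
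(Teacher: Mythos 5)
Your proposal is correct and follows essentially the same route as the paper's own argument in Section 4.1.2: drop the $O(r^{-1})$ term $-\tfrac{4M}{r^2}\Olin\otxb$, integrate $-2\olinb\otx$ by parts in $u$ using $\olinb=\partial_u\Olin$ and the transport equation (\ref{dbtc}), invoke the Gauss equation together with the round-sphere condition to get $r\otxb\to-4\Olin$, and cancel the $\elin$ terms via $2\slashed{\nabla}\Olin=\elin+\eblin$ and an angular integration by parts. You in fact spell out the final cancellations ($+8\int(\Olin)^2$ against $-8\int(\Olin)^2$, and the rewriting of the boundary term as $\tfrac12[r^3\otxb\otx]_{u_1}^{u_2}$) in slightly more detail than the paper does; note also that the "obstacle" you flag is harmless here, since Definition \ref{def:extendskri} already gives $r^3\Klin$ a finite limit and hence $r^2\Klin\to0$ along every cone $C_u$ without appealing separately to the evolutionary character of (\ref{rsc}).
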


\subsection{The conservation law} \label{sec:cl}
The following conservation law holds:
\begin{proposition} \label{prop:conslaw}
Let $\mathscr{S}$ be a solution to the system of gravitational perturbations. \\
For any $u_0<u_1< u_2\leq \infty$ and $v_0 < v_1<v_2 < \infty$ we have the conservation law
\begin{align}
F_v\left[\Gamma, \mathscr{S} \right] \left(u_0,u_1\right)+ F_u \left[\Gamma, \mathscr{S} \right] \left(v_0,v_1\right) = F_{v_0} \left[\Gamma, \mathscr{S} \right] \left(u_0,u_1\right) +  F_{u_0}\left[\Gamma, \mathscr{S} \right] \left(v_0,v_1\right) \, .
\end{align}
\end{proposition}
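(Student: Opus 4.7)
The plan is to establish the conservation law via a pointwise differential identity. Subtracting the two sides of the claimed equality and applying the fundamental theorem of calculus in both $u$ and $v$ reduces the claim to showing that for every fixed $(u,v)$ one has
$$\int_{S^2}\bigl(\partial_v I_v + \partial_u I_u\bigr)\sin\theta\, d\theta\, d\phi = 0,$$
where $I_v,I_u$ denote the pointwise integrands appearing in (\ref{vflux}) and (\ref{uflux}). Conceptually, this is the expected Noether-type identity for the timelike Killing field $T=\partial_u+\partial_v$ of Schwarzschild applied to the (second-order) linearised Einstein--Hilbert Lagrangian; a covariant derivation would require more machinery than needed, so the direct route is to verify the identity by brute-force substitution.

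I would compute $\partial_v I_v + \partial_u I_u$ by substituting the evolution equations from Sections \ref{=lmc}--\ref{=lcc}, organized into five groups according to the bilinear structure of the integrand:
(i) the shear squares $\Omega^2|\xblin|^2$ and $\Omega^2|\xlin|^2$, differentiated via (\ref{chih3}), (\ref{chih3b});
(ii) the torsion squares $\Omega^2|\elin|^2$ and $\Omega^2|\eblin|^2$, via (\ref{propeta});
(iii) the $(\otxb)^2,(\otx)^2$ contributions, via (\ref{dtcb}), (\ref{dbtc});
(iv) the $\olinb\otx,\olin\otxb$ cross terms, via (\ref{oml1})--(\ref{vray});
(v) the $\Olin$-weighted terms, via (\ref{oml3}) together with (\ref{dtcb}), (\ref{dbtc}).
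Throughout, one tracks the background values $\Omega\, tr\chi = 2\Omega^2/r = -\Omega\, tr\underline{\chi}$, $\omega=-\underline{\omega}=M/r^2$, $\rho=-2M/r^3$, and pays careful attention to the nontrivial contribution $\partial_v\Omega^2=(2M/r^2)\Omega^2$ acting on the prefactor $r^2\Omega^2$.

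The principal obstacle is verifying that all contributions which are not pure angular divergences cancel exactly. Two matchings are critical. First, (i) produces $\Omega^2\xblin\cdot\slashed{\mathcal{D}}_2^\star\eblin$ and $\Omega^2\xlin\cdot\slashed{\mathcal{D}}_2^\star\elin$ cross terms; integrating by parts on $S^2$ and then applying the Codazzi equations (\ref{ellipchi}) to rewrite $\slashed{div}\,\xblin,\slashed{div}\,\xlin$ in terms of $\bblin,\elin,\slashed{\nabla}\otxb$ (and their counterparts) should exactly absorb the $\blin\cdot\elin$ and $\bblin\cdot\eblin$ cross terms generated in (ii); the leftover $\slashed{\nabla}\otxb\cdot\elin$ and $\slashed{\nabla}\otx\cdot\eblin$ remnants then combine (after another integration by parts) with the $\slashed{div}\,\elin,\slashed{div}\,\eblin$ pieces from (iii) and (v) into pure angular divergences. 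Second, the $\Omega\rlin\Olin$ terms in (iv) (arising through (\ref{oml1}), (\ref{oml2})) must cancel the $\rlin\Olin$ contributions from (v) (arising when (\ref{dtcb}), (\ref{dbtc}) are applied to $\Olin\otxb,\Olin\otx$). The risk is almost entirely combinatorial bookkeeping: factors of two, signs, and the accounting of $\partial_v r$ terms across the roughly twenty bilinear contributions must all conspire, leaving only angular divergences which vanish upon integration over $S^2$.
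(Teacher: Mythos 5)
Your proposal is correct and follows essentially the same route as the paper: reduce the claim to the pointwise identity $\int_{S^2}\left(\partial_v I_v + \partial_u I_u\right)\sin\theta\, d\theta\, d\phi=0$, substitute the transport and null-structure equations, and close the remaining cross terms by integrating by parts on the sphere and invoking the Codazzi equations (\ref{ellipchi}), exactly as in the paper's direct computation. One bookkeeping remark: the $\rlin$-terms your group (iv) generates via (\ref{oml1})--(\ref{oml2}) are of the form $\rlin\,\otx$ and $\rlin\,\otxb$ and cancel against the corresponding terms produced in group (iii) by (\ref{dtcb})--(\ref{dbtc}), while the $\rlin\,\Olin$ contributions of group (v) cancel among themselves between the two fluxes --- the pairing is slightly different from the one you anticipate, but all cancellations do close.
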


\begin{proof}
Direct computation using the null structure and Codazzi equations. We compute 
\begin{align}
\partial_v \left( r^2  \Big[ -2 \olinb \otx - \frac{1}{2} \left( \otxb \right)^2 
 - \frac{4M}{r^2} \left(\Olin\right)\otxb +2\Omega^2 |\elin|^2 + \Omega^2 |\xblin|^2 \Big] \right) \nonumber \\
=-2 \left(r^2 \otx\right) \left(-\Omega^2 \left(\rlin -\frac{4M}{r^3} \Olin \right)\right) - 2 \olinb r^2 \left(+ \frac{2M}{r^2} \otx +\frac{4\Omega^2}{r} \olin \right) \nonumber \\
- r\otxb \cdot r \left( \Omega^2 \left[ 2 \slashed{div} \underline{\elin} + 2 \rlin -\frac{8M}{r^3} \Olin \right] + \frac{\Omega^2}{r}   \otx  \right) 
-4M \cdot \olin \otxb \nonumber \\
 -4M \left(\Olin\right) \left( \Omega^2 \left[ 2 \slashed{div} \underline{\elin} + 2 \rlin -\frac{8M}{r^3} \Olin \right] - \frac{\Omega^2}{r} \left(\otxb - \otx  \right) \right) \nonumber \\
 +4M \Omega^2 |\elin|^2 + 4r \Omega^2\elin \left( + \Omega^2 \underline{\elin} - \Omega r \blin\right) 
 + 2r \Omega\xblin \cdot \Omega \left(\Omega^2 \xlin - 2\Omega r \slashed{\mathcal{D}}_2^\star \underline{\elin} \right) \, ,
\end{align}
and similarly
\begin{align}
\partial_u \left(r^2 \Big[ -2\olin \otxb - \frac{1}{2} \left( \otx \right)^2
+ \frac{4M}{r^2} \left(\Olin\right) \otx +2\Omega^2 |\eblin|^2 + \Omega^2 |\xlin|^2 \Big]\right) \nonumber \\
=-2 \left(r^2\otxb\right) \left(-\Omega^2 \left(\rlin -\frac{4M}{r^3} \Olin \right)\right) - 2 \olin r^2 \left(-\frac{2M}{r^2} \otxb -\frac{4\Omega^2}{r} \olinb \right) \nonumber \\
- r\otx \cdot r \left(\Omega^2 \left[ 2 \slashed{div} \elin + 2 \rlin -\frac{8M}{r^3} \Olin \right] -  \frac{\Omega^2}{r}\otxb  \right) + 4M \cdot \olinb \otx \nonumber \\
+ 4M \left( \Olin \right) \left(\Omega^2 \left[ 2 \slashed{div} {\elin} + 2 \rlin -\frac{8M}{r^3} \Olin \right] - \frac{\Omega^2}{r} \left(\otxb - \otx  \right)\right)\nonumber \\
 -4M \Omega^2 |\eblin|^2 + 4r \Omega^2\eblin \left( - \Omega^2 \elin + \Omega r \bblin\right) 
 + 2r \Omega \xlin \cdot \Omega \left(-\Omega^2\xblin - 2\Omega r \slashed{\mathcal{D}}_2^\star {\eta} \right) \, . \nonumber
\end{align}
Summing the two expressions and integrating over $\int_{S^2} \sin \theta d\theta d\phi$ (which we do not write out, instead ``$\equiv$" indicates equality after this integration) we find the expression
\begin{align}
\equiv -2 r^2 \otx \left(-\Omega^2 \left(\rlin -\frac{4M}{r^3} \Olin \right)\right) - 2 \olinb r^2 \left(+ \frac{2M}{r^2} \otx +\frac{4\Omega^2}{r} \olin \right) \nonumber \\
- 2r^2\otxb \cdot \left( \Omega^2 \left[  \slashed{div} \eblin +  \rlin -\frac{4M}{r^3} \Olin \right]  \right) 
-4M \cdot \olin \otxb \nonumber \\
 -2 r^2\otxb \left(-\Omega^2 \left(\rlin -\frac{4M}{r^3}\Olin \right)\right) - 2 \olin r^2 \left(-\frac{2M}{r^2} \otxb -\frac{4\Omega^2}{r} \olinb \right) \nonumber \\
- 2r^2\otx \cdot  \left(\Omega^2 \left[  \slashed{div} \elin +  \rlin -\frac{4M}{r^3} \Olin \right]  \right) + 4M \cdot \olinb \otx \nonumber \\
+ 4M \left( \Olin \right)\Omega^2 \left( 2 \slashed{div} \elin - 2 \slashed{div} \eblin \right) \nonumber \\
+4M \Omega^2 |\elin|^2 -4r^2 \Omega^3 \elin \left(\blin + \slashed{div} \xlin \right)  -4M \Omega^2 |\eblin|^2 
 + 4 r^2 \Omega^3 \eblin \left( \bblin - \slashed{div}\xblin \right) \, .
\end{align}
From this further cancellations in the first two lines are obvious and an additional integration by parts yields
\begin{align}
\equiv - 2 \olinb r^2 \left(+\frac{2M}{r^2}\otx + \frac{4\Omega^2}{r}\olin \right) 
- 2r^2\otxb \cdot  \Omega^2 \left[  \slashed{div} \elin \right]  
-4M \cdot \olin \otxb \nonumber \\
 - 2 \olin r^2 \left(-\frac{2M}{r^2}\otxb -\frac{4\Omega^2}{r} \olinb \right)
- 2r^2 \otx \cdot  \Omega^2 \left[  \slashed{div} \elin  \right]  + 4M \cdot \olinb \otx \nonumber \\
+ 8M \Omega^2  \Olin \left(  \slashed{div}\elin -  \slashed{div} \eblin \right) 
+4M \Omega^2 |\elin|^2 -4r^2 \Omega^3 \elin \left(\blin + \slashed{div} \xlin \right)  -4M \Omega^2 |\eblin|^2 
 + 4 r^2 \Omega^3 \eblin \left( \bblin - \slashed{div}\xblin \right) \nonumber \, ,
\end{align}
which is seen to vanish after a further integration by parts in the angular variables and inserting the linearised Codazzi equation (\ref{ellipchi}).
\end{proof}
The following figure illustrates the conservation law for a region bounded by the initial cones $C_{u_0}$ and $C_{v_0}$ and the future null cones given by $u_1=u_{fin}$ and $v_1=v_{fin}$. This is how it will be applied later.
\vspace{.2cm}
\[
\begin{picture}(0,0)%
\includegraphics{consss.pstex}%
\end{picture}%
\setlength{\unitlength}{1263sp}%
\begingroup\makeatletter\ifx\SetFigFont\undefined%
\gdef\SetFigFont#1#2#3#4#5{%
  \reset@font\fontsize{#1}{#2pt}%
  \fontfamily{#3}\fontseries{#4}\fontshape{#5}%
  \selectfont}%
\fi\endgroup%
\begin{picture}(10299,10297)(2089,-9973)
\put(12151,-3961){\rotatebox{45.0}{\makebox(0,0)[lb]{\smash{{\SetFigFont{5}{6.0}{\rmdefault}{\mddefault}{\updefault}{\color[rgb]{0,0,0}$u=u_0$}%
}}}}}
\put(8551,-361){\rotatebox{45.0}{\makebox(0,0)[lb]{\smash{{\SetFigFont{5}{6.0}{\rmdefault}{\mddefault}{\updefault}{\color[rgb]{0,0,0}$u=u_{fin}$}%
}}}}}
\put(3901,-2161){\makebox(0,0)[lb]{\smash{{\SetFigFont{5}{6.0}{\rmdefault}{\mddefault}{\updefault}{\color[rgb]{0,0,0}$\mathcal{H}$}%
}}}}
\put(9451,-1861){\makebox(0,0)[lb]{\smash{{\SetFigFont{5}{6.0}{\rmdefault}{\mddefault}{\updefault}{\color[rgb]{0,0,0}$\mathcal{I}^+$}%
}}}}
\put(8851,-2461){\rotatebox{315.0}{\makebox(0,0)[lb]{\smash{{\SetFigFont{5}{6.0}{\rmdefault}{\mddefault}{\updefault}{\color[rgb]{0,0,0}$v=v_{fin}$}%
}}}}}
\put(4951,-6211){\rotatebox{315.0}{\makebox(0,0)[lb]{\smash{{\SetFigFont{5}{6.0}{\rmdefault}{\mddefault}{\updefault}{\color[rgb]{0,0,0}$v=v_0$}%
}}}}}
\end{picture}%

\]

\section{Gauge invariance of the $u$-flux modulo boundary terms} \label{sec:ginv}
In this section, we compute how the flux $F_u\left[\Gamma, \mathscr{S} \right] \left(v_1,v_2\right)$ transforms under the addition of a pure gauge solution generated by a gauge function $f\left(v, \theta,\phi\right)$. To derive these formulae we will not need any gauge conditions on the solution. We have

\begin{proposition} \label{prop:gaugechange}
Let $\mathscr{S}$ be a solution of the system of gravitational perturbations. Let $f\left(v,\theta,\phi\right)$ be a smooth gauge function generating a pure gauge solution $\mathscr{G}$ of the system of gravitational perturbations as in Lemma \ref{lem:exactsol}. Finally, set
\begin{align}
\mathscr{S} = \tilde{\mathscr{S}} + \mathscr{G}
\end{align}
thereby defining a new solution $\tilde{\mathscr{S}}$. Then the flux on fixed constant-$u$ hypersurfaces satisfies
\begin{align}
F_u \left[\Gamma, \mathscr{S} \right] \left(v_0,v\right)= F_u \left[\Gamma, \tilde{\mathscr{S}}\right] \left(v_0,v\right) + \int_{S^2} \sin \theta d\theta d\phi\left( \mathcal{G} \left(v, u,\theta,\phi\right)-\mathcal{G} \left(v_0, u,\theta,\phi\right)\right) \nonumber
\end{align}
with
\begin{align} \label{finge}
\mathcal{G}  \left(v, u,\theta,\phi\right)  = & \left(\Omega^2 f\right)^2 \frac{6M}{r^2}
 - \frac{1}{2} \Omega^{-2} r^3 \left(\otx_{\So}-{\otx}_{\tilde{\mathscr{S}}} \right)\otxb_{\So} \nonumber \\
&+ f \Omega^2 \left(2r^2 \left(-\slashed{div} \, {\eblin}_{\Sop} + {\rlin}_{\Sop}\right)\right)
- f \Omega^2 \cdot \frac{1}{r}\left(1-\frac{4M}{r}\right) r^2 \Omega^{-2} {\otx}_{\Sop}  \, ,
\end{align}
where the subscripts $\So$ or $\Sop$ indicate whether the geometric quantity is associated with the solution $\So$ or $\Sop$.
In other words, the difference of the fluxes in the old and in the new gauge is a pure boundary term.
\end{proposition}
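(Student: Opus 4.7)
The flux $F_u[\Gamma,\mathscr{S}]$ is quadratic in the dynamical quantities, so polarizing the substitution $\mathscr{S} = \tilde{\mathscr{S}} + \mathscr{G}$ decomposes the difference as
\[
F_u[\Gamma,\mathscr{S}] - F_u[\Gamma,\tilde{\mathscr{S}}] = \underbrace{\text{(cross terms linear in } \mathscr{G})}_{\text{(I)}} + \underbrace{F_u[\Gamma,\mathscr{G}]}_{\text{(II)}}.
\]
My plan is to show that the integrand of (I)+(II), after integration over the sphere, is a total $v$-derivative of the scalar $\mathcal{G}(v,u,\theta,\phi)$; then the $v$-integration from $v_0$ to $v$ gives exactly the boundary contribution stated in the proposition.

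First I would write out (I) explicitly. Since $\xlin_\mathscr{G} = 0$ (from Lemma \ref{lem:exactsol}), the $|\xlin|^2$ term contributes no cross piece, and the surviving cross terms are
\[
r^2\Bigl[ -2(\olin_\mathscr{G} \otxb_{\Sop} + \olin_{\Sop} \otxb_\mathscr{G}) - \otx_{\Sop}\otx_\mathscr{G} + \tfrac{4M}{r^2}(\Olin_\mathscr{G} \otx_{\Sop} + \Olin_{\Sop} \otx_\mathscr{G}) + 4\Omega^2 \eblin_{\Sop}\cdot \eblin_\mathscr{G}\Bigr].
\]
Substituting the explicit pure gauge expressions (so that $\olin_\mathscr{G}, \otxb_\mathscr{G}, \otx_\mathscr{G}, \Olin_\mathscr{G}, \eblin_\mathscr{G}$ are all expressed in terms of $f$, $\partial_v f$, $\Delta_{S^2} f$ and their combinations with $\Omega^2, r$), one can integrate by parts on the sphere to trade $\slashed{\nabla} f$ against $\slashed{div}$ of the $S^2$-one-forms in $\Sop$, and $\Delta_{S^2} f$ against $f$ times appropriate angular Laplacians. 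This should cast (I) into terms of the schematic form $f \cdot (\text{solution quantities of } \Sop)$.

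Next I would simplify (II) directly: since $\mathscr{G}$ has $\xlin_\mathscr{G}=0$, $F_u[\Gamma,\mathscr{G}]$ is a polynomial expression in $f$ and its derivatives with explicit $r$- and $\Omega^2$-weights. To convert (I)+(II) into a total $v$-derivative, I would integrate by parts in $v$ to remove all appearances of $\partial_v^2 f$ (coming through $\olin_\mathscr{G}$) in favor of boundary terms plus $(\partial_v f)^2$ and $\partial_v f \cdot f$ structures. The equations for $\Sop$ that will be needed are the evolution equation \eqref{dbtc} for $\otx_\Sop$ along $\partial_u$ (which relates $\partial_u \otx_\Sop$ to $\slashed{div}\elin_\Sop + \rlin_\Sop + \rho \Olin_\Sop$), but more importantly, whenever $\partial_v$ derivatives land on $\Sop$ quantities I would use the system from Section \ref{=lmc} and Section \ref{=lRc}: specifically \eqref{stos} to eliminate $\partial_v(r\otxb_{\Sop} + 4\Omega^2 \Olin_\Sop)$-type combinations, \eqref{oml3} to rewrite $\olin_\Sop = \partial_v \Olin_\Sop$, and \eqref{dtcb} for $\partial_v \otxb_\Sop$.

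The main obstacle is purely combinatorial: matching, after integrations by parts in $v$ and on $S^2$, the rather rigid structure of $\mathcal{G}$ in \eqref{finge}. In particular, the first term $(\Omega^2 f)^2 \cdot 6M/r^2$ must arise from collecting all quadratic-in-$f$ contributions after using $\rho = -2M/r^3$ and the explicit form of $\rlin_\mathscr{G}$, while the mixed term $-\tfrac{1}{2}\Omega^{-2} r^3\, \otx_\mathscr{G}\, \otxb_\Sop$ should appear upon integrating by parts the cross term $\otxb_\mathscr{G} \cdot (\text{stuff from } \Sop)$ against $\partial_v$. I expect the bulk cancellations to be tight but to work out once one carefully groups the terms by their $f$-valence (pure $f^2$, $f\cdot\Sop$-linear, and pure $\Sop$-quadratic, the last of which must exactly reconstruct $F_u[\Gamma,\tilde{\mathscr{S}}]$). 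No gauge conditions on $\mathscr{S}$ are used anywhere, in line with the hypothesis of the proposition.
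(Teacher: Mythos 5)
Your overall strategy coincides with the paper's: substitute $\mathscr{S}=\Sop+\mathscr{G}$ into the integrand of \eqref{uflux}, sort the result by $f$-valence into the new-gauge flux, the cross terms, and the purely quadratic-in-$f$ terms, and show that after integration over the sphere the latter two groups are total $v$-derivatives, which upon $v$-integration give the boundary term $\mathcal{G}$. The paper does exactly this, splitting the cross terms into six pieces and the quadratic terms into four, and checking that all non-boundary contributions cancel.

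There is, however, a concrete gap in your list of ingredients. The boundary term \eqref{finge} contains $f \Omega^2 \cdot 2r^2 (-\slashed{div}\, \eblin_{\Sop} + \rlin_{\Sop})$, and to produce it one must absorb the bulk combination $\partial_v(\Omega^2 f)\,\bigl(2\slashed{div}\,\elin_{\Sop}+2\rlin_{\Sop}\bigr)$ --- which arises from $\olin_{\Sop}\,\otxb_{\mathscr{G}}$ after angular integration by parts via \eqref{oml3}, and from $\partial_v$ landing on $\otxb_{\Sop}$ via \eqref{dtcb} --- into a total $v$-derivative. This requires the transport identity
\[
\partial_v\left(r^3\left(\slashed{div}\,\elin+\rlin\right)\right)=\Omega^2 r^2\,\slashed{div}\left(\elin+\eblin\right)+3M\,\otx\,,
\]
which follows from the Bianchi equation \eqref{Bianchi4} combined with \eqref{propeta}. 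These live in Section \ref{=lcc}, outside the two sections you restrict yourself to, and without this identity the $\rlin_{\Sop}$ and $\slashed{div}\,\elin_{\Sop}$ terms cannot be closed up into boundary terms. Two smaller slips: the transport equation needed for $\partial_v\otx_{\Sop}$ is \eqref{uray}, not the $\partial_u$-equation \eqref{dbtc}, and \eqref{stos} (a metric equation) plays no role here. Finally, since the entire content of the proposition is the precise formula \eqref{finge} --- this exact expression is what is exploited in Section \ref{sec:maintheo} --- deferring the ``combinatorial'' verification leaves the decisive part of the proof undone: the claim is not merely that the difference of fluxes is \emph{some} boundary term, but that it is \emph{this} one.
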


\begin{proof}
According to Lemma \ref{lem:exactsol}, the \emph{integrand} of the $u$-flux in (\ref{uflux}) changes to
\begin{align} \label{newintegrand}
f_u\left[\Gamma, \mathscr{S}\right] := -2 \left({\olin}_{\Sop} + \frac{1}{2} \partial_v \left(\Omega^{-2} \partial_v (f\Omega^2)\right) \right) \left({\otxb}_{\Sop} + 2\frac{\Omega^2}{r^2} \left(\Delta_{S^2} f + f \left(1-\frac{4M}{r}\right) \right) \right) \nonumber \\
  - \frac{1}{2} \left( {\otx}_{\Sop} + 2 \partial_v \left(\frac{\Omega^2 f}{r}\right) \right)^2 \nonumber \\
+ \frac{4M}{r^2} \left({\Olin}_{\Sop} + \frac{1}{2} \Omega^{-2} \partial_v (f\Omega^2)\right)   \left( {\otx}_{\Sop} + 2 \partial_v \left(\frac{\Omega^2 f}{r}\right) \right) \nonumber \\
+2\Omega^2 \Big|{\eblin}_{\Sop} + \Omega^{-2} r \slashed{\nabla} \left(\partial_v \left(\frac{\Omega^2 f}{r}\right)\right) \Big|^2 +\Omega^2 |{\xlin}_{\Sop}|^2 .
\end{align}
Upon expanding there are three types of terms in (\ref{newintegrand}): The first type will produce the (integrand of the) flux in the new gauge, the second are mixed terms (``involving one $f$") and the third are the remaining terms (involving two $f$'s). As the terms of the first type are easily collected we focus on the latter two. We keep the convention that ``$\equiv$" denotes equality after integration over the unit $S^2$ (which allows integration by parts over the angular variables).
\subsubsection*{Mixed terms}
\begin{align}
\mathcal{A} \equiv - {\otxb}_{\Sop} \ \partial_v \left(\Omega^{-2} \partial_v (f\Omega^2)\right) - 4 {\olin}_{\Sop} \frac{\Omega^2}{r^2} \left(\Delta_{S^2} f + f \left(1-\frac{4M}{r}\right) \right) \nonumber \\
-2{\otx}_{\Sop} \ \partial_v \left(\frac{\Omega^2 f}{r}\right) + \frac{8M}{r^2}  \left({\Olin}_{\Sop}\right)\partial_v \left(\frac{\Omega^2 f}{r}\right) \nonumber \\
+ \frac{2M}{r^2} \Omega^{-2} \partial_v \left(f\Omega^2\right)  {\otx}_{\Sop} - 4 r \slashed{div}\, {{\eblin}_{\Sop}} \ \partial_v \left(\frac{\Omega^2 f}{r}\right) \, .
\end{align}
We denote these six terms by
\begin{align}
\mathcal{A} \equiv \mathcal{A}_1 + \mathcal{A}_2 + \mathcal{A}_3 + \mathcal{A}_4 +\mathcal{A}_5 + \mathcal{A}_6 \, .
\end{align}
We have
\begin{align} \label{a1com}
\mathcal{A}_1 \equiv &-\frac{1}{r^2} \partial_v \left(r^2 {\otxb}_{\Sop} \left(\Omega^{-2} \partial_v (f\Omega^2)\right)\right)  \nonumber \\
&+ \frac{ \partial_v (f\Omega^2)}{r}\left[2 r \slashed{div} {\eblin}_{\Sop} + 2r  {\rlin}_{\Sop}- \frac{8M}{r^2}  {\Olin}_{\Sop}  +{\otx}_{\Sop} +{\otxb}_{\Sop}\right] \, , 
\end{align}
\begin{align} \label{a2com}
\mathcal{A}_2 \equiv -\frac{4}{r^2} \partial_v \left( {\Olin}_{\Sop} \left(\Delta_{S^2} f\Omega^2 + f \Omega^2 \left(1-\frac{4M}{r}\right) \right) \right)+ \frac{16M}{r^4} f \Omega^4 {\Olin}_{\Sop} \nonumber \\
+ 2 \left(\slashed{div} \, {\elin} + \slashed{div}\, {\eblin} \right)_{\Sop} \partial_v \left(f\Omega^2\right) + \frac{4}{r^2} \left(1-\frac{4M}{r}\right) {\Olin}_{\Sop} \ \partial_v \left(f\Omega^2\right) 
\end{align}
and we can therefore write $\mathcal{A}$ as
\begin{align} 
\mathcal{A} &\equiv 
 \partial_v \left(\Omega^2 f\right) \Big[  2\slashed{div} {\elin}_{\Sop} +  2{\rlin}_{\Sop}
 -\frac{1}{r}\Omega^{-2} \left(1-\frac{4M}{r}\right) {\otx}_{\Sop}  + \frac{4}{r^2} \left(1-\frac{4M}{r}\right){\Olin}_{\Sop} 
 +\frac{1}{r} {\otxb}_{\Sop} \Big] \nonumber \\
 &+ f\Omega^2 \left[\frac{8M}{r^4}  \Omega^2 {\Olin}_{\Sop} + 2 \frac{\Omega^2}{r^2}  {\otx}_{\Sop} + \frac{4}{r} \Omega^2 \slashed{div}\, {\eblin}_{\Sop}\right] +\mathcal{B}_1 + \mathcal{B}_2  \nonumber 
\end{align}
where $\mathcal{B}_1$ and $\mathcal{B}_2$ are the boundary terms (i.e.~the first term of (\ref{a1com}) and (\ref{a2com}) respectively) encountered above.  
We now write the term multiplying $\partial_v \left(\Omega^2 f\right)$ as a boundary term:
\begin{align}
 \mathcal{A} \equiv \mathcal{B}_1 + \mathcal{B}_2 +\frac{1}{r^2} \partial_v \left(f \Omega^2 \left(\frac{2}{r} r^3 \left(\slashed{div} \, {\elin}_{\Sop} + {\rlin}_{\Sop}\right)\right) \right) \nonumber \\ + \frac{2}{r} \Omega^4 f \left(\slashed{div} \, {\elin}_{\Sop} + {\rlin}_{\Sop}\right) 
- 2\frac{\Omega^2}{r} f \left(\Omega^2 \slashed{div} {\eblin}_{\Sop} + \Omega^2 \slashed{div}\, {\elin}_{\Sop} + \frac{3M}{r^2}{\otx}_{\Sop}  \right) \nonumber \\
- \frac{1}{r^2} \partial_v \left(f \Omega^2 \cdot \frac{1}{r}\left(1-\frac{4M}{r}\right) r^2 \Omega^{-2} {\otx}_{\Sop} \right)  \nonumber \\ - f\Omega^2 \left(  \frac{1}{r^2} - \frac{8M}{r^3}\right) {\otx}_{\Sop}
+4 \left(1-\frac{4M}{r}\right) \frac{f\Omega^2}{r^2} {\olin}_{\Sop} \nonumber \\
+ 4 \frac{1}{r^2} \partial_v  \left( f\Omega^2 \left(1-\frac{4M}{r}\right){\Olin}_{\Sop} \right) \nonumber \\
- 4 \frac{1}{r^2} f\Omega^2 \left(1-\frac{4M}{r}\right){\olin}_{\Sop} - \frac{16M}{r^4}  f \Omega^4 {\Olin}_{\Sop} \nonumber \\
+ \frac{1}{r^2} \partial_v \left(f \Omega^2 \cdot r  {\otxb}_{\Sop} \right)\nonumber \\
- \frac{f\Omega^4}{r^2} \left[2 r \slashed{div}\,{\eblin}_{\Sop} + 2r {\rlin}_{\Sop} - \frac{8M}{r^2}  {\Olin}_{\Sop} + {\otx}_{\Sop} \right] \nonumber \\
+ f\Omega^4 \left[\frac{8M}{r^4}   {\Olin}_{\Sop} + 2 \frac{1}{r^2} {\otx}_{\Sop} + \frac{4}{r}  \slashed{div} \,{\eblin}_{\Sop} \right] \nonumber \, .
\end{align}
Note that all terms which are not boundary terms cancel. For the above we have used the evolution equation (which holds for both $\So$ and $\Sop$ as the pure gauge solution is a solution of the system)
\begin{align}
\partial_v \left(r^3 \left(\slashed{div} \elin + \rlin\right)\right) = \Omega^2 r^2 \slashed{div} \left( \eblin + \elin \right)+ 3M \otx
\end{align}
which is easily derived from (\ref{Bianchi4}) and (\ref{propeta}), as well as the propagation equations (\ref{uray}) and (\ref{dtcb}).
%
We summarise this as 
\begin{align}
\mathcal{A} \equiv &-\frac{1}{r^2} \partial_v \left(r^2 {\otxb}_{\Sop} \left(\Omega^{-2} \partial_v (f\Omega^2)\right)\right) \nonumber \\
 &-\frac{4}{r^2} \partial_v \left( {\Olin}_{\Sop} \left(\Delta_{S^2} f\Omega^2 + f \Omega^2 \left(1-\frac{4M}{r}\right) \right) \right) \nonumber \\
 &+\frac{1}{r^2} \partial_v \left(f \Omega^2 \left(\frac{2}{r} r^3 \left(\slashed{div} \,{\elin}_{\Sop} + {\rlin}_{\Sop}\right)\right) \right) \nonumber \\
&- \frac{1}{r^2} \partial_v \left(f \Omega^2 \cdot \frac{1}{r}\left(1-\frac{4M}{r}\right) r^2 \Omega^{-2} {\otx}_{\Sop} \right) \nonumber \\
&+ 4 \frac{1}{r^2} \partial_v  \left( f\Omega^2 \left(1-\frac{4M}{r}\right) {\Olin}_{\Sop} \right) \nonumber \\
&+ \frac{1}{r^2} \partial_v \left(f \Omega^2 \cdot r  {\otxb}_{\Sop} \right) \, ,
\end{align}
which simplifies to
\begin{align}
\mathcal{A} \equiv &-\frac{1}{r^2} \partial_v \left(r^3 {\otxb}_{\Sop} \left(\Omega^{-2} \partial_v \left(\frac{f\Omega^2}{r}\right)\right)\right) \nonumber \\
 &+\frac{1}{r^2} \partial_v \left(f \Omega^2 \left(2r^2 \left(-\slashed{div} \, {\eblin}_{\Sop}+ {\rlin}_{\Sop} \right)\right) \right) \nonumber \\
&- \frac{1}{r^2} \partial_v \left(f \Omega^2 \cdot \frac{1}{r}\left(1-\frac{4M}{r}\right) r^2 \Omega^{-2} {\otx}_{\Sop} \right) \, .
\end{align}

\subsubsection*{Remaining terms}
We turn to the remaining quadratic terms in (\ref{newintegrand}) since the terms which produce the flux expression in the new gauge are easily taken care of. Here we have
\begin{align}
 -2 \left(  \partial_v \left(\Omega^{-2} \partial_v (f\Omega^2)\right) \right) \left( \frac{\Omega^2}{r^2} \left(\Delta_{S^2} f + f \left(1-\frac{4M}{r}\right) \right) \right) 
  -  2 \left(\partial_v \left(\frac{\Omega^2 f}{r}\right) \right)^2 \nonumber \\
+ \frac{4M}{r^2} \left(\frac{1}{2} \Omega^{-2} \partial_v (f\Omega^2)\right)   \left( 2 \partial_v \left(\frac{\Omega^2 f}{r}\right) \right) 
+2\Omega^{-2} \Big| r \slashed{\nabla}_A \left(\partial_v \left(\frac{\Omega^2 f}{r}\right)\right) \Big|^2  \nonumber
\end{align}
and we will refer to this expression as $\tilde{\mathcal{A}}$ and
\begin{align}
\tilde{\mathcal{A}} = \tilde{\mathcal{A}}_1 + ... + \tilde{\mathcal{A}}_4
\end{align}
We have (recall $\gamma$ denotes the round metric on the unit sphere)
\begin{align}
\tilde{\mathcal{A}}_1 &= -\frac{2}{r^2} \partial_v \left( \Omega^{-2} \partial_v (f\Omega^2)\left(\Delta_{S^2} f\Omega^2 + f \Omega^2 \left(1-\frac{4M}{r}\right) \right) \right) \nonumber \\
&+ \frac{2}{r^2} \Omega^{-2} \partial_v \left(f \Omega^2\right) \left(\Delta_{S^2} \partial_v ( f \Omega^2)+ \partial_v (f \Omega^2) \left(1-\frac{4M}{r}\right) + f \Omega^4 \frac{4M}{r^2} \right) \nonumber \\
&\equiv-\frac{2}{r^2} \partial_v \left( \Omega^{-2}  \partial_v ( f\Omega^2)\left(\Delta_{S^2} f\Omega^2 + f \Omega^2 \left(1-\frac{4M}{r}\right) \right) \right)+\frac{4M}{r^2} \partial_v \left( \frac{1}{r^2} (f\Omega^2)^2 \right) \nonumber \\
&- \frac{2}{r^2} \Omega^{-2} |\partial_v \left(\nabla_{S^2} f \Omega^2\right)|_{\gamma}^2  + \frac{2}{r^2} \Omega^{-2} \left(1-\frac{4M}{r}\right) | \partial_v (f\Omega^2)|^2 + \frac{8M}{r^5} f^2 \Omega^6  \nonumber
\end{align}
\begin{align}
\tilde{\mathcal{A}}_2 &= -2 \frac{1}{r^2}  \left(\partial_v \left(\Omega^2 f\right) \right)^2+2 \frac{\Omega^2}{r^3} \partial_v \left(\Omega^2 f\right)^2 - 2 \frac{f^2 \Omega^8}{r^4} \nonumber \\
&=-2 \frac{1}{r^2}  \left(\partial_v \left(\Omega^2 f\right) \right)^2+2 \frac{1}{r^2} \partial_v \left(\frac{\Omega^2}{r} \left( \Omega^2 f\right)^2\right) - \frac{4M}{r^5} \Omega^6 f^2
\end{align}
\begin{align}
\tilde{\mathcal{A}}_3 &= \frac{4M}{r^3} \Omega^{-2} \left(\partial_v (f \Omega^2)\right)^2 - \frac{2M}{r^4} \partial_v \left(f\Omega^2\right)^2 \nonumber \\
&=\frac{4M}{r^3} \Omega^{-2} \left(\partial_v (f \Omega^2)\right)^2 - \frac{2M}{r^2} \partial_v \left(\frac{1}{r^2} \left(f\Omega^2\right)^2\right) - \frac{4M}{r^5} \Omega^6 f^2
\end{align}
\begin{align}
\tilde{\mathcal{A}}_4 &= +2 \frac{\Omega^{-2}}{r^2}  |\partial_v \left(\Omega^2 \nabla_{S^2} f\right)|_{\gamma}^2-2 \frac{1}{r^3} \partial_v \left|\Omega^2 \nabla_{S^2} f\right|_{\gamma}^2 + 2 \frac{|\nabla_{S^2} f|_{\gamma}^2 \Omega^6}{r^4} \nonumber \\
&= +2 \frac{\Omega^{-2}}{r^2}  |\partial_v \left(\Omega^2 \nabla_{S^2} f\right)|_{\gamma}^2-2 \frac{1}{r^2} \partial_v \left(\frac{1}{r} \left|\Omega^2 \nabla_{S^2} f\right|_{\gamma}^2\right) \, .
\end{align}
We see that all terms except boundary terms cancel and hence
\begin{align}
\tilde{\mathcal{A}} &= -\frac{2}{r^2} \partial_v \left( \Omega^{-2}  \partial_v ( f\Omega^2)\left(\Delta_{S^2} f\Omega^2 + f \Omega^2 \left(1-\frac{4M}{r}\right) \right) \right)+\frac{4M}{r^2} \partial_v \left( \frac{1}{r^2} (f\Omega^2)^2 \right) \nonumber \\
&+2 \frac{1}{r^2} \partial_v \left(\frac{\Omega^2}{r} \left( \Omega^2 f\right)^2\right) 
 - \frac{2M}{r^2} \partial_v \left(\frac{1}{r^2} \left(f\Omega^2\right)^2\right)-2 \frac{1}{r^2} \partial_v \left(\frac{1}{r} \left|\Omega^2 \nabla_{S^2} f\right|_{\gamma}^2\right)  \nonumber
\end{align}
is also a pure boundary term.

\subsubsection*{Summary}
In summary, we have proven the desired proposition for $\mathcal{G}$ being
\begin{align}
\mathcal{G}  \left(v, u,\theta,\phi\right) = -\frac{2}{r}  \left|\Omega^2 \nabla_{S^2} f\right|_{\gamma}^2 + \left(\Omega^2 f\right)^2 \left(\frac{2}{r} - \frac{2M}{r^2}\right) - \Omega^{-2} r^3 \partial_v \left(\frac{\Omega^2 f}{r}\right)\otxb_{\So} \nonumber \\
-rf \Omega^2  \left(\otxb_{\So} - {\otxb}_{\Sop} \right) + f \Omega^2 \left(2r^2 \left(-\slashed{div} {\eblin}_{\Sop} + {\rlin}_{\Sop}\right)\right) \nonumber \\
- f \Omega^2 \cdot \frac{1}{r}\left(1-\frac{4M}{r}\right) r^2 \Omega^{-2} {\otx}_{\Sop} \nonumber
\end{align}
which can be simplified to
\begin{align}
\mathcal{G}  \left(v, u,\theta,\phi\right)  = + \left(\Omega^2 f\right)^2 \left(\frac{2}{r} - \frac{2M}{r^2}\right)
 - \frac{1}{2} \Omega^{-2} r^3 \left(\otx_{\So}-{\otx}_{\Sop}\right)\otxb_{\So} \nonumber \\
-\frac{2}{r}f \Omega^4  \left(\Delta_{S^2} f + f \left(1-\frac{4M}{r}\right)\right) + f \Omega^2 \left(2r^2 \left(-\slashed{div} \, {\underline{\eta}_{\Sop}} + {\rlin}_{\Sop}\right)\right) \nonumber \\
- f \Omega^2 \cdot \frac{1}{r}\left(1-\frac{4M}{r}\right) r^2 \Omega^{-2} {\otx}_{\Sop}  -\frac{2}{r}  \left|\Omega^2 \nabla_{S^2} f\right|_{\gamma}^2\nonumber
\end{align}
and after further integration by parts in the angular variables to the expression appearing in (\ref{finge}).
\end{proof}

\section{Choice of the gauge function normalised to the cone $C_{u_{fin}}$} \label{sec:gaugechoice}
Let $\Si^\prime$ be a partially initial data normalised solution supported on $\ell \geq 2$. Fix an outgoing null cone $C_{u_{fin}}$ for some $u_{fin} > u_0$. We will now define a particular gauge function, normalised to the null cone $C_{u_{fin}}$, which will generate a pure gauge solution $\mathscr{G}$ through Lemma \ref{lem:exactsol}, which when subtracted from $\Si^\prime$ will produce positivity of the flux $F_{u_{fin}} \left[\Gamma, \Sf^\prime=\Si^\prime-\mathscr{G}\right]$. Specifically, we define the gauge function
\begin{align} \label{choicef}
f \left(v, \theta,\phi \right) = \frac{r}{\Omega^2} \left(u_{fin}, v,\theta,\phi\right) \int_{v_0}^{v} \otx_{\Si^\prime} \left(u_{fin},\bar{v}, \theta, \phi\right) d\bar{v} \, .
\end{align}
Clearly $f\left(v_0, \theta,\phi\right)=0$. We have 

\begin{lemma}  \label{lem:newval}
Under the assumptions of Proposition \ref{prop:gaugechange} with $\mathscr{S}=\Si^\prime$ a partially initial data normalised solution supported on $\ell \geq 2$ and with $f$ defined as in (\ref{choicef}) generating a pure gauge solution $\mathscr{G}$, we have on the null hypersurface $C_{u_{fin}}$ the following identities for the geometric quantities of $\Sf^\prime:=\Si^\prime - \mathscr{G}$:
\begin{align}
{\otx}_{\Sf^\prime} \left(u_{fin},v,\theta,\phi\right) = 0 \ \ \ \textrm{and} \ \ \ {\olin}_{\Sf^\prime}\left(u_{fin},v,\theta,\phi\right) = 0 \, .
\end{align}
Moreover,
\begin{align} \label{fit}
2 {\Olin}_{\Sf^\prime}\left(u_{fin},v,\theta,\phi\right)  = 2 {\Olin}_{\Sf^\prime}\left(u_{fin},v_0,\theta,\phi\right)  = \left( 2\Olin -\frac{r}{2\Omega^2} \otx\right)_{\Si^\prime} \left(u_{fin},v_0,\theta,\phi\right)
\end{align}
and
\begin{align}
r^3 \left( \slashed{div} \, {\elin}_{\Sf^\prime} + {\rlin}_{\Sf^\prime} \right) \left(u_{fin},v,\theta,\phi\right) = \ & r^3 \left( \slashed{div}\, \elin + \rlin \right)_{\Si^\prime} \left(u_{fin},v_0,\theta,\phi\right) \nonumber \\
&+ \left(r\left(u_{fin},v\right) - r\left(u_{fin},v_0\right) \right) \Delta_{S^2} \left( 2{\Olin}\right)_{\Sf^\prime} \left(u_{fin},v,\theta,\phi\right) \, .
\end{align}
\end{lemma}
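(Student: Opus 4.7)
The strategy is direct substitution of the explicit pure-gauge formulas of Lemma \ref{lem:exactsol} with the particular choice of $f$ in (\ref{choicef}), followed by a short cascade of consequences drawn from the transport equations of Section \ref{=lRc} and the metric identities of Section \ref{=lmc}. The conceptual idea is that $f$ is engineered precisely so that, on $C_{u_{fin}}$, the combination $f\Omega^2/r$ is (up to a conventional factor) the primitive $\int_{v_0}^{v}\otx_{\Si^\prime}(u_{fin},\cdot)\,d\bar v$, and in particular $f(v_0,\theta,\phi)=0$.

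For the first two claims I would restrict $\otx_{\mathscr{G}} = 2\partial_v(f\Omega^2/r)$ from Lemma \ref{lem:exactsol} to the cone $C_{u_{fin}}$. The prefactor $r/\Omega^2$ in (\ref{choicef}) cancels on this cone, giving $(f\Omega^2/r)(u_{fin},v) = \int_{v_0}^{v}\otx_{\Si^\prime}(u_{fin},\bar v)\,d\bar v$, so that $\partial_v(f\Omega^2/r)(u_{fin},\cdot) = \otx_{\Si^\prime}(u_{fin},\cdot)$. Matching the overall factor yields $\otx_{\mathscr{G}} = \otx_{\Si^\prime}$ on $C_{u_{fin}}$, hence $\otx_{\Sf^\prime}\equiv 0$ there. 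The vanishing of $\olin_{\Sf^\prime}$ then follows by applying (\ref{uray}) to $\Sf^\prime$ on $C_{u_{fin}}$: since $\otx_{\Sf^\prime}$ and $\partial_v\otx_{\Sf^\prime}$ both vanish, (\ref{uray}) collapses to $2(\Omega tr\chi)\olin_{\Sf^\prime}=0$, and non-vanishing of the Schwarzschild background $\Omega tr\chi$ forces $\olin_{\Sf^\prime}=0$ on $C_{u_{fin}}$.

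For (\ref{fit}), the equation $\partial_v\Olin = \olin$ from (\ref{oml3}) combined with $\olin_{\Sf^\prime}=0$ along $C_{u_{fin}}$ gives $\partial_v\Olin_{\Sf^\prime}=0$ there, so $\Olin_{\Sf^\prime}(u_{fin},v,\cdot)=\Olin_{\Sf^\prime}(u_{fin},v_0,\cdot)$, which is the first equality in (\ref{fit}). For the second equality, the identity $2\Olin_{\mathscr{G}} = \Omega^{-2}\partial_v(f\Omega^2) = \partial_v f + (2M/r^2)f$ evaluated at $v=v_0$ reduces (since $f(v_0)=0$) to $\partial_v f(v_0)$, which by (\ref{choicef}) equals (up to the same conventional factor) $(r/\Omega^2)\otx_{\Si^\prime}(u_{fin},v_0,\cdot)$; subtracting from $2\Olin_{\Si^\prime}(u_{fin},v_0,\cdot)$ yields (\ref{fit}).

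For the final claim, combine (\ref{Bianchi4}) with the second equation in (\ref{propeta}) to derive the propagation identity
\begin{equation*}
\partial_v\bigl(r^3(\slashed{div}\,\elin + \rlin)\bigr) = \Omega^2 r^2\,\slashed{div}(\elin+\eblin) + 3M\,\otx,
\end{equation*}
already invoked in the proof of Proposition \ref{prop:gaugechange}. Applied to $\Sf^\prime$ on $C_{u_{fin}}$: the last term drops by $\otx_{\Sf^\prime}=0$, while (\ref{oml3}) rewrites $\slashed{div}(\elin+\eblin) = 2\slashed{\Delta}\Olin = (2/r^2)\Delta_{S^2}\Olin$, turning the source into $2\Omega^2\Delta_{S^2}\Olin_{\Sf^\prime}$, which by the previous paragraph is an angular function independent of $v$ multiplied by $\Omega^2 = \partial_v r$. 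Integration from $v_0$ to $v$ produces the factor $r(u_{fin},v) - r(u_{fin},v_0)$ in front of $\Delta_{S^2}(2\Olin)_{\Sf^\prime}$. The initial value at $v=v_0$ coincides with the $\Si^\prime$ value because the pure-gauge contributions $\slashed{div}\,\elin_{\mathscr{G}}$ and $\rlin_{\mathscr{G}}$ are both proportional to $f$ by Lemma \ref{lem:exactsol} and hence vanish at $v=v_0$. No serious analytic obstacle arises; the whole lemma amounts to verifying that (\ref{choicef}) produces the stated normalisations, and once the first claim is in place the rest follows mechanically from transport identities already listed in Sections \ref{=lmc}--\ref{=lcc}.
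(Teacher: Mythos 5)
Your proof is correct and follows essentially the same route as the paper: direct substitution of the pure gauge formulas to get $\otx_{\Sf^\prime}=0$, then the transport equations (\ref{uray}), (\ref{oml3}) and the identity $\partial_v\bigl(r^3(\slashed{div}\,\elin+\rlin)\bigr)=\Omega^2 r^2\slashed{div}(\elin+\eblin)+3M\otx$ to cascade to the remaining claims. The ``conventional factor'' you hedge on is a genuine factor-of-$2$ mismatch between (\ref{choicef}) as printed and the conclusions of the lemma (one needs $f=\tfrac{r}{2\Omega^2}\int\otx$ for $\otx_{\mathscr{G}}=\otx_{\Si^\prime}$), so your reading of the intended normalisation is the right one.
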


\begin{proof}
Suppressing the dependence on $\theta,\phi$ in the notation for the proof, we have from Lemma \ref{lem:exactsol}
\begin{align}
\otx_{\Si^\prime} \left(u_{fin},v\right) = {\otx}_{\Sf^\prime} \left(u_{fin},v\right)+ 2 \partial_v \left(\frac{f \Omega^2}{r}\right)\left(u_{fin},v\right) \, 
\end{align}
and the first claim of the Lemma follows. From the transport equation for ${\otx}_{\Sf^\prime} \left(u_{fin},v\right)$ along $u=u_{fin}$, (\ref{uray}), we conclude that ${\olin}_{\Sf^\prime} \left(u_{fin},v\right)=0$ for any $v \geq v_0$. We also have
\begin{align}
2\Olin_{\Si^\prime} =2 {\Olin}_{\Sf^\prime} + \frac{r}{2\Omega^2} \left( \otx_{\Si^\prime} - {\otx}_{\Sf^\prime}\right) + \frac{f \Omega^2}{r} \, .
\end{align}
Therefore (using that ${\olin}_{\Sf^\prime} \left(u_{fin},v\right)=0$ for any $v \geq v_0$) , along $u=u_{fin}$ we obtain
\begin{align}
2 {\Olin}_{\Sf^\prime}\left(u_{fin},v\right)  = 2 {\Olin}_{\Sf^\prime} \left(u_{fin},v_0\right) = \left(2\Olin -\frac{r}{2\Omega^2} \otx \right)_{\Si^\prime} \left(u_{fin},v_0\right) \, .
\end{align}
To verify the last claim of the Lemma note first that
\begin{align}
\left(\slashed{div} \elin + \rlin\right)_{\Si^\prime} \left(u_{fin},v_0\right) = \left(\slashed{div} {\elin} +{\rlin}\right)_{\Sf^\prime} \left(u_{fin},v_0\right) \, 
\end{align}
since $f\left(v_0,\theta,\phi\right)=0$. Moreover, 
along $u_{fin}$ we have the evolution equation
\begin{align} \label{yu}
\partial_v \left(r^3 \left(\slashed{div} \, {\elin} + {\rlin}\right)_{\Sf^\prime} \left(u_{fin},v\right) \right) = \Omega^2\Delta_{S^2} \left(2{\Olin}\right)_{\Sf^\prime} \, .
\end{align}
Since the expression multiplying $\Omega^2$ on the right hand side is constant along $u=u_{fin}$ by the identity (\ref{fit}), integration of (\ref{yu}) yields the claim.
\end{proof}

\begin{remark} \label{rem:hozlimit}
Note that since $\Si^\prime$ is partially initial data normalised supported on $\ell \geq 2$ we have
\begin{align}
\lim_{u_{fin} \rightarrow \infty} \left(\slashed{div} \elin + \rlin\right)_{\Si^\prime} \left(u_{fin},v_0,\theta,\phi\right) = 0 \, 
\end{align}
and from (\ref{fit}) and Proposition \ref{prop:idc} also
\begin{align}
\lim_{u_{fin} \rightarrow \infty} \left({\Olin}\right)_{\Sf^\prime} \left(u_{fin},v,\theta,\phi\right)  = \lim_{u_{fin} \rightarrow \infty} \left({\Olin}\right)_{\Sf^\prime}  \left(u_{fin},v_0,\theta,\phi\right) = 0 \, 
\end{align}
and the angular commuted version
\begin{align}
\lim_{u_{fin} \rightarrow \infty} \slashed{\Delta} \left({\Olin}\right)_{\Sf^\prime} \left(u_{fin},v,\theta,\phi\right)  = \lim_{u_{fin} \rightarrow \infty} \slashed{\Delta} \left({\Olin}\right)_{\Sf^\prime} \left(u_{fin},v_0,\theta,\phi\right) = 0 \, .
\end{align}
\end{remark}

\begin{remark}
One can define the gauge function $f$ in (\ref{choicef}) also on the horizon $u_{fin}=\infty$ by taking an appropriate limit. This would recover the horizon-normalised gauge of \cite{DHR}. In this paper, however, we are only going to use $f$ defined for $u_{fin} <\infty$ and take the limit as $v\rightarrow \infty$. See Theorem \ref{theo:mtheo} below.
\end{remark}

\section{The main theorem} \label{sec:maintheo}
We are now ready to state the main theorem. We first define, for any $u>u_0$ the following initial data energy on $C_{u_0} \cup C_{v_0}$ associated with $\Si^\prime$ a partially initial data normalised solution supported on $\ell \geq 2$ which extends to null infinity:
\begin{align} \label{dataedef}
\mathcal{E}_{data} \left[\Si^\prime\right] \left(u\right) := & F_{u_0} \left[\Gamma, \Si^\prime\right] \left(v_0,\infty\right) + F_{v_0} \left[\Gamma, \Si^\prime\right] \left(u_0,u\right) \nonumber \\
&+\frac{1}{2} \lim_{v\rightarrow \infty} \int_{S^2} \sin \theta d\theta d\phi \left(r^3 \otxb \otx \left(u_0,v,\theta, \phi \right)\right) \, .
\end{align}
This energy is continuous in $u$, uniformly bounded for all $u>u_0$ (by the regularity of the solution near the horizon) and it can be computed explicitly from the data. We also define 
\begin{align} \label{dataedefl}
\mathcal{E}_{data} \left[\Si^\prime\right] :=\lim_{u \rightarrow \infty} \mathcal{E}_{data} \left[\Si^\prime\right] \left(u\right) \, .
\end{align}
Note that this limit is again well-defined by the regularity of the solution. Note also that at this point, we do not know whether $\mathcal{E}_{data} \left[\Si^\prime\right] \left(u\right)$ or $\mathcal{E}_{data} \left[\Si^\prime\right] $ is non-negative, however we will deduce the non-negativity of the total initial energy $\mathcal{E}_{data} \left[\Si^\prime\right]$ \emph{a posteriori} from the following theorem:

\begin{theorem} \label{theo:mtheo}
Consider $\Si^\prime$ a partially initial data normalised solution of the system of gravitational perturbations supported on $\ell \geq 2$ which is extendible to null infinity. Let the associated initial energy $\mathcal{E}_{data}\left[\Si^\prime\right]$ be as defined in (\ref{dataedef}). Then for any fixed $\infty>u_{fin} > u_0$ the following estimate holds:
\begin{align}
 & \ \ \ \ \ \ \  \ \ \int_{v_0}^{\infty} dv  \int_{S^2} r^2 \sin \theta d\theta d\phi \  |\xlin \Omega|^2  \left(u_{fin},v,\theta,\phi\right) \nonumber \\
+&\limsup_{v_{fin} \rightarrow \infty} \int_{u_0}^{u_{fin}} d u  \int_{S^2} r^2 \sin \theta d\theta d\phi   |\xblin |^2 \left(u,v_{fin},\theta,\phi\right)
 \leq \mathcal{E}_{data} \left[\Si^\prime\right] \left(u_{fin}\right)+ \mathcal{R}\left(u_{fin},v_0\right)  , \nonumber
\end{align}
with the remainder term on the right hand side defined in terms of the initial data as
\begin{align}
\mathcal{R}\left(u_{fin},v_0\right) &= \frac{1}{M} \int_{S^2} \sin \theta d\theta d\phi \Bigg[  4 |r^3 \left(\slashed{div}{\elin}_{} + \rlin_{}\right)|^2 
+ 16\Big| r \Delta_{S^2} \left( \Olin -\frac{r}{4\Omega^2} \otx\right)\Big|^2 \Bigg] \left(u_{fin},v_{0},\theta,\phi\right) \nonumber \\
&+\int_{S^2} \sin \theta d\theta d\phi \frac{1}{2} \Omega^{-2} r^3  \otx_{} \otxb_{}\left(u_{fin},v_{0},\theta,\phi\right)
\end{align}
and satisfying 
\[
\lim_{u_{fin} \rightarrow \infty} \mathcal{R}\left(u_{fin},v_0\right) =0 \, .
\]
\end{theorem}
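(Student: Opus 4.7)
The plan is to combine the conservation law from Proposition \ref{prop:conslaw}, the gauge-normalization from Section \ref{sec:gaugechoice}, and the null-infinity limit from Proposition \ref{prop:nullit}. First I would apply Proposition \ref{prop:conslaw} to $\Si^\prime$ in the rectangle $[u_0,u_{fin}]\times[v_0,v_{fin}]$, obtaining
\[
F_{v_{fin}}[\Gamma,\Si^\prime](u_0,u_{fin})+F_{u_{fin}}[\Gamma,\Si^\prime](v_0,v_{fin})=F_{v_0}[\Gamma,\Si^\prime](u_0,u_{fin})+F_{u_0}[\Gamma,\Si^\prime](v_0,v_{fin}).
\]
The right hand side is essentially the initial-data energy (up to the $v_{fin}\to\infty$ limit on $C_{u_0}$), but the flux on $C_{u_{fin}}$ on the left has indefinite terms. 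To cure this, I decompose $\Si^\prime=\Sf^\prime+\mathscr{G}$ where $\mathscr{G}$ is the pure gauge solution generated by the gauge function $f$ of (\ref{choicef}). Applying Proposition \ref{prop:gaugechange} converts $F_{u_{fin}}[\Gamma,\Si^\prime]$ into $F_{u_{fin}}[\Gamma,\Sf^\prime]$ plus the integrated boundary term $\int_{S^2}(\mathcal{G}(v_{fin})-\mathcal{G}(v_0))\sin\theta\, d\theta d\phi$. By Lemma \ref{lem:newval} the quantities $\otx_{\Sf^\prime}$ and $\olin_{\Sf^\prime}$ vanish along $C_{u_{fin}}$, so $F_{u_{fin}}[\Gamma,\Sf^\prime]$ collapses to the manifestly non-negative integral of $r^2\sin\theta[\Omega^2|\xlin_{\Sf^\prime}|^2+2\Omega^2|\eblin_{\Sf^\prime}|^2]$; since $\xlin$ vanishes in every pure gauge solution (Lemma \ref{lem:exactsol}), $\xlin_{\Sf^\prime}=\xlin_{\Si^\prime}$ and the desired flux is captured exactly.

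Next I would pass to the limit $v_{fin}\to\infty$. For the flux on $C_{v_{fin}}$ I invoke Proposition \ref{prop:nullit}, which replaces $F_{v_{fin}}[\Gamma,\Si^\prime](u_0,u_{fin})$ by $\int_{u_0}^{u_{fin}}\int_{S^2} r^2\sin\theta |\xblin|^2 + \tfrac{1}{2}\lim_{v\to\infty}\int_{S^2}\sin\theta\, r^3\otxb\otx\big|^{u_{fin}}_{u_0}$. The $-\tfrac12$ times this boundary term at $u=u_0$ is absorbed into the definition of $\mathcal{E}_{data}$, while the $C_{u_0}$ flux limit recovers $F_{u_0}[\Gamma,\Si^\prime](v_0,\infty)$. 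The contribution at $u=u_{fin}$ must be combined with the $v_{fin}\to\infty$ limit of $\int_{S^2}\mathcal{G}(v_{fin},u_{fin},\theta,\phi)\sin\theta d\theta d\phi$. Using $\otx_{\Sf^\prime}=0$ on $C_{u_{fin}}$ and the asymptotics (from Definition \ref{def:extendskri}) of $\otxb$, $\rlin$, $\eblin$, $\Olin$, a short computation should show this combined limit is non-positive or controllable, allowing it to be dropped or combined with the $|\xblin|^2$ term via Fatou to yield the limsup estimate.

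The remaining boundary contribution $-\int_{S^2}\mathcal{G}(v_0,u_{fin})\sin\theta d\theta d\phi$ at $v_0$ — together with analogous boundary terms produced by applying Proposition \ref{prop:gaugechange} to $F_{u_0}[\Gamma,\Si^\prime]$ (the gauge function there can be taken trivial since we only need the Proposition for the $F_{u_{fin}}$-flux, so in fact $F_{u_0}$ is unaltered) — will produce the second line of $\mathcal{R}(u_{fin},v_0)$ directly, since $f(v_0,\cdot)=0$ kills the three terms of (\ref{finge}) containing $f$ and leaves only the $-\tfrac12\Omega^{-2}r^3(\otx_{\Si^\prime}-\otx_{\Sf^\prime})\otxb_{\Si^\prime}$ piece, evaluated at $v_0$. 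The first line of $\mathcal{R}(u_{fin},v_0)$, weighted by $1/M$, arises as an upper bound on the residual boundary contributions from the terms involving $f$ and $\Delta_{S^2}f$ after using Lemma \ref{lem:newval} to express $\slashed{div}\,{\eblin}_{\Sf^\prime}-\rlin_{\Sf^\prime}$ along $C_{u_{fin}}$ in terms of $\slashed{div}\,\elin_{\Si^\prime}+\rlin_{\Si^\prime}$ and $\Delta_{S^2}(\Olin-\frac{r}{4\Omega^2}\otx)_{\Si^\prime}$ at $v_0$, and then applying weighted Cauchy–Schwarz/Young inequalities to absorb the positive $|\Omega\eblin_{\Sf^\prime}|^2$ term in the $C_{u_{fin}}$ flux; the $M$-weight comes from the coefficient structure of the Schwarzschild background. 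This bookkeeping step — making the limit $v_{fin}\to\infty$ of $\mathcal{G}(v_{fin},u_{fin})$ cancel cleanly against the Bondi-type boundary $\tfrac12 r^3\otxb\otx$ at $\mathcal{I}^+\cap C_{u_{fin}}$, and then extracting precisely the advertised form of $\mathcal{R}$ from the $v_0$ boundary — is the main technical obstacle.

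Finally, $\mathcal{R}(u_{fin},v_0)\to0$ as $u_{fin}\to\infty$ follows from Proposition \ref{prop:idc}, which gives $\Olin-\frac{r}{4\Omega^2}\otx=\mathcal{O}(\Omega^2)$ together with its angular-commuted version on $C_{v_0}$, from the horizon gauge condition giving $\slashed{div}\,\elin+\rlin\to 0$ at $(u_{fin}\to\infty,v_0)$ (cf.\ Remark \ref{rem:hozlimit}), and from the regularity of $\Omega^{-2}\otxb$ together with $\otx=\mathcal{O}(\Omega^2)$, which bound the third term by something of order $\Omega^2\to 0$.
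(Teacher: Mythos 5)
Your proposal follows the paper's proof essentially step for step: conservation law applied in the initial-data gauge, gauge transformation of the $C_{u_{fin}}$-flux alone via the $f$ of (\ref{choicef}), Lemma \ref{lem:newval} to kill the indefinite terms in $F_{u_{fin}}[\Gamma,\Sf^\prime]$, Proposition \ref{prop:nullit} for the $v_{fin}\to\infty$ limit, and Proposition \ref{prop:idc} plus the horizon gauge condition for $\mathcal{R}\to 0$. Two remarks. First, the cancellation you flag as ``the main technical obstacle'' is in fact immediate: the gauge boundary term at $(u_{fin},v_{fin})$ contributes $-\tfrac{1}{2}\Omega^{-2}r^3\otx_{\Si^\prime}\otxb_{\Si^\prime}$ (since $\otx_{\Sf^\prime}=0$ on $C_{u_{fin}}$), the Bondi-type term from Proposition \ref{prop:nullit} contributes $+\tfrac{1}{2} r^3\otx_{\Si^\prime}\otxb_{\Si^\prime}$, and their sum is $(1-\Omega^{-2})\cdot\tfrac{1}{2}r^3\otx\otxb = O(1/r)\to 0$ because $r^3\otx\otxb$ has a finite limit by Definition \ref{def:extendskri}. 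Second --- and this is the one step that as written would fail --- the weighted Cauchy--Schwarz producing the first line of $\mathcal{R}$ cannot be played off against the flux term $\int 2\Omega^2|\eblin_{\Sf^\prime}|^2\,dv$: the terms to be absorbed, namely $\frac{f\Omega^2}{r}(u_{fin},v_{fin})\cdot\left(2r^3(\slashed{div}\,\elin+\rlin)_{\Si^\prime}-4r\Delta_{S^2}\Olin_{\Sf^\prime}\right)(u_{fin},v_0)$, are integrals over a single sphere, and a sphere integral cannot be absorbed into a $v$-integrated flux. The correct absorber is the positive sphere term $(\Omega^2 f)^2\frac{6M}{r^2}(u_{fin},v_{fin})$ already present in the gauge boundary term $\mathcal{G}(v_{fin})$ of (\ref{finge}); Young's inequality with weight $3M$ against each of the two bad terms then yields precisely the $1/M$-weighted first line of $\mathcal{R}$ (with room to spare in the constants), while the $|\Omega\eblin_{\Sf^\prime}|^2$ flux is simply kept as a good term and dropped at the end. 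With this correction your argument closes exactly as in the paper.
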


\begin{proof}
As the proof will involve subtracting a pure gauge solution from the solution $\Si^\prime$, we will use subscripts $\Si^\prime$ to denote the geometric quantities associated with the solution $\Si^\prime$ for the duration of the proof, i.e.~we write $\left(\xlin\right)_{\Si^\prime}=\xlin$, $\otx_{\Si^\prime}=\otx$ etc.

Applying Proposition \ref{prop:conslaw} in the initial data gauge yields for any $u_{fin}, v_{fin}$ fixed the identity
\begin{align} \label{mit}
F_{u_{0}} \left[\Gamma, \Si^\prime \right] \left(v_0,v_{fin}\right) 
+ F_{v_0}\left[\Gamma, \Si^\prime \right] \left(u_0,u_{fin}\right)
=  F_{u_{fin}} \left[\Gamma, \Si^\prime \right] \left(v_0,v_{fin}\right) 
+ F_{v_{fin}}\left[\Gamma, \Si^\prime \right] \left(u_0,u_{fin}\right) \, .
\end{align}
The strategy now, roughly, is to establish positivity up to a boundary term of the terms on the right hand side and then to take that boundary term to the left. The resulting expression on the left will be converted to $\mathcal{E}_{data} \left[\Si^\prime\right] \left(u_{fin}\right)$ after taking the limit $v_{fin} \rightarrow \infty$. 

The details are as follows. Define
\[
\Sf^\prime = \Si^\prime - \mathscr{G} \, ,
\]
where $\mathscr{G}$ is the pure gauge solution generated by the $f$ chosen in (\ref{choicef}), cf.~Lemma \ref{lem:exactsol}. Using Proposition \ref{prop:gaugechange} and Lemma \ref{lem:newval}, the flux $F_{u_{fin}}\left[\Gamma, \Si^\prime \right] \left(v_0,v_{fin}\right)$ transforms according to (recall $f\left(v_0,\theta,\phi\right)=0$):
\begin{align} \label{sc1}
F_{u_{fin}}\left[\Gamma, \Si^\prime \right] &\left(v_0,v_{fin}\right) =  F_{u_{fin}}\left[\Gamma, \Sf^\prime\right] \left(v_0,v_{fin}\right) + \int_{S^2} \sin \theta d\theta d\phi \Bigg[ \left(\Omega^2 f\right)^2 \frac{6M}{r^2} \left(u_{fin},v_{fin}\right)
\nonumber \\ 
&- \frac{1}{2} \Omega^{-2} r^3 \left(\otx_{\Si^\prime}\otxb_{\Si^\prime}\right) \left(u_{fin},v_{fin}\right) +  \frac{1}{2} \Omega^{-2} r^3 \left(\otx_{\Si^\prime}\otxb_{\Si^\prime} \right) \left(u_{fin},v_{0}\right) \nonumber \\
& + \frac{f \Omega^2}{r} \left(2r^3 \left(+\slashed{div} \, {\elin} + {\rlin}\right)_{\Sf^\prime}\right) \left(u_{fin},v_{fin}\right)- \frac{f\Omega^2}{r} 2r^3 \left(\slashed{div} \, {\eblin} + \slashed{div} {\elin} \right)_{\Sf^\prime}\left(u_{fin},v_{fin}\right) \Bigg] \, , 
\end{align}
while Proposition \ref{prop:nullit} yields
\begin{align} \label{sc2}
F_{v_{fin}} \left[\Gamma, \Si^\prime \right] \left(u_0,u_{fin}\right) = & \int_{u_0}^{u_{fin}} d u  \int_{S^2} d\theta d\phi \ r^2 \sin \theta   \Big|\left(\xblin\right)_{\Si^\prime}\Big|^2 \left(u,v_{fin}\right)
\nonumber \\
&+\frac{1}{2}\int \sin \theta d\theta d\phi r^3  \otxb_{\Si^\prime} \otx_{\Si^\prime} \left(u_{fin},v_{fin}\right)
\nonumber \\
&-\frac{1}{2}\int \sin \theta d\theta d\phi r^3   \otxb_{\Si^\prime} \otx_{\Si^\prime} \left(u_{0},v_{fin}\right)  \nonumber \\
&+\textrm{terms vanishing in the limit $v_{fin} \rightarrow \infty$} \, .
\end{align}
Using Lemma \ref{lem:newval} and noting $\left({\xlin}\right)_{\Si^\prime}=\left(\xlin\right)_{\Sf^\prime}$ we simplify the sum of (\ref{sc1}) and (\ref{sc2}) to
\begin{align} \label{lemu}
F_{u_{fin}}\left[\Gamma, \Si^\prime \right] \left(v_0,v_{fin}\right) &+ F_{v_{fin}} \left[\Gamma, \Si^\prime \right] \left(u_0,u_{fin}\right)  = \nonumber \\
& \int_{v_0}^{v_{fin}} dv \int_{S^2} d\theta d\phi r^2 \sin \theta \left[ \Big|\left(\xlin\right)_{\Si^\prime} \Omega\Big|^2 + 2\Omega^2 \Big|\left({\eblin}\right)_{\Sf^\prime}\Big|^2 \right] \left(u_{fin},v\right) \nonumber \\ 
&+\int_{u_0}^{u_{fin}} d u  \int_{S^2} d\theta d\phi \ r^2 \sin \theta   \Big|\left(\xblin\right)_{\Si^\prime}\Big|^2 \left(u,v_{fin}\right) + Q \, ,
 \end{align}
 where the extra term $Q$ is given by (use Lemma \ref{lem:newval})
 \begin{align}
Q=  \int_{S^2} \sin \theta d\theta d\phi \Bigg[ &\left(\Omega^2 f\right)^2 \frac{6M}{r^2} \left(u_{fin},v_{fin}\right)
\nonumber \\ 
+ & \frac{1}{2} \Omega^{-2} r^3  \otx_{\Si^\prime} \otxb_{\Si^\prime} \left(u_{fin},v_{0}\right) 
- \frac{1}{2} r^3  \otxb_{\Si^\prime} \otx_{\Si^\prime}  \left(u_{0},v_{fin}\right)  \nonumber \\
&+ \frac{f \Omega^2}{r} \left(u_{fin},v_{fin}\right) \cdot \left(2r^3 \left(+\slashed{div}{\elin}_{} + \rlin_{}\right)_{\Si^\prime} -4 r \Delta_{S^2} \left({\Olin}\right)_{\Sf^\prime} \right) \left(u_{fin},v_{0}\right) \Bigg]\nonumber \\
&+\textrm{terms vanishing in the limit $v_{fin} \rightarrow \infty$} \, .
\end{align}
Note that a cancellation (up to a term vanishing in the limit $v_{fin} \rightarrow \infty$) has appeared between the first term in the second line of (\ref{sc1}) and the term in the second line of (\ref{sc2}). Applying the Cauchy-Schwarz inequality to the expression for $Q$ exploiting the positive first term we can estimate
\[
Q \geq -\mathcal{R} \left(u_{fin},v_0\right) -\frac{1}{2}\int_{S^2} \sin \theta d\theta d\phi r^3 \otx_{\Si^\prime} \otxb_{\Si^\prime} \left(u_{0},v_{fin}\right) +\textrm{terms vanishing as $v_{fin} \rightarrow \infty$} \, .
\]
Inserting this back into (\ref{lemu}) and combining (\ref{lemu}) with (\ref{mit}) we conclude after taking the limit $v_{fin} \rightarrow \infty$
\begin{align}
\mathcal{E}_{data} \left[\Si^\prime\right] \left(u_{fin}\right) + \mathcal{R} \left(u_{fin},v_0\right) &\geq 
 \int_{v_0}^{\infty} dv  d\theta d\phi r^2 \sin \theta \left[ \Big|\left(\xlin\right)_{\Si^\prime} \Omega\Big|^2 + 2\Omega^2 \Big|\left({\eblin}\right)_{\Sf^\prime} \Big|^2 \right] \left(u_{fin},v\right) \nonumber \\
&\ \ \ +\limsup_{v_{fin} \rightarrow \infty} \int_{u_0}^{u_{fin}} d u  \int_{S^2} d\theta d\phi \ r^2 \sin \theta   \Big|\left(\xblin\right)_{\Si^\prime}\Big|^2 \left(u,v_{fin}\right) \, ,
\end{align}
which proves the estimate claimed in the theorem. The conclusion about the limit of $\mathcal{R}\left(u_{fin},v_0\right)$ follows directly from the horizon gauge condition satisfied by $\Si^\prime$ and Remark \ref{rem:hozlimit} in conjunction with (\ref{fit}).
\end{proof}

\begin{remark}
The estimates derived in the proof of Theorem \ref{theo:mtheo} give more control than explicitly stated. In particular, the gauge function $f$ is controlled (we dropped a good term in the expression for $Q$) and so is $\left({\eblin}\right)_{\Sf^\prime}$. This will be exploited in future work.
\end{remark}

\begin{remark}
Taking the limit $u_{fin} \rightarrow \infty$ in Theorem \ref{theo:mtheo} we see that we must have $\mathcal{E}_{data}\left[\Si^\prime\right]\geq 0$. Note also that the formula for $\mathcal{E}_{data}\left[\Si^\prime\right]$ simplifies considerably for a fully initial data normalised solution.
\end{remark}

Taking the limit $u_{fin} \rightarrow \infty$ and using that the pointwise limit of the quantity $r\xblin$ actually exists on null infinity (in view of the solution $\Si^\prime$ being extendible to null infinity), we conclude control on the energy fluxes through the event horizon and null infinity:
\begin{corollary} \label{cor:mtheo}
With the assumptions of Theorem \ref{theo:mtheo} we have that
\begin{itemize}
\item  the total flux of the linearised shear on null infinity is bounded:
\begin{align} \label{xz}
 \int_{u_0}^{\infty} d u  \int_{S^2} d\theta d\phi \ r^2 \sin \theta   |\xblin|^2 \left(u,\infty,\theta,\phi\right) \leq \mathcal{E}_{data} \left[\Si^\prime\right] \, ,
\end{align}
\item the total flux of the linearised shear on the horizon is bounded:
\begin{align}
 \int_{v_0}^{\infty} dv  \int_{S^2}d\theta d\phi r^2 \sin \theta \left[ |\xlin\Omega|^2 \right] \left(\infty,v,\theta,\phi\right) \leq \mathcal{E}_{data} \left[\Si^\prime\right] \, .
\end{align}
\end{itemize}
\end{corollary}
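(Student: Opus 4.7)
The plan is to obtain the corollary directly from Theorem \ref{theo:mtheo} by taking the limit $u_{fin}\to\infty$, combined with the known limits $\mathcal{R}(u_{fin},v_0)\to 0$ (established in the theorem) and $\mathcal{E}_{data}[\Si^\prime](u_{fin})\to\mathcal{E}_{data}[\Si^\prime]$ (the definition (\ref{dataedefl})). Since the right-hand side of the estimate in Theorem \ref{theo:mtheo} is uniformly bounded in $u_{fin}$ and converges to $\mathcal{E}_{data}[\Si^\prime]$, the task is simply to transfer each of the two non-negative terms on the left-hand side to the limit $u_{fin}=\infty$ (for the horizon flux) and $v_{fin}=\infty$ (for the null infinity flux).

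For the horizon bound, I would fix any finite $u_{fin}>u_0$ and apply Theorem \ref{theo:mtheo} to get
\[
\int_{v_0}^{\infty} dv \int_{S^2} d\theta d\phi\, r^2\sin\theta\, |\Omega\xlin|^2(u_{fin},v,\theta,\phi) \leq \mathcal{E}_{data}[\Si^\prime](u_{fin}) + \mathcal{R}(u_{fin},v_0).
\]
By the smoothness of the weighted quantity $\Omega\xlin$ across $\mathcal{H}^+$ (Remark \ref{rem:smooth}), the integrand converges pointwise on each sphere $S^2_{u_{fin},v}$ to its value at $u=\infty$ as $u_{fin}\to\infty$. Since the integrand is non-negative, Fatou's lemma yields
\[
\int_{v_0}^{\infty} dv \int_{S^2} d\theta d\phi\, r^2\sin\theta\, |\Omega\xlin|^2(\infty,v,\theta,\phi) \leq \liminf_{u_{fin}\to\infty}\left(\mathcal{E}_{data}[\Si^\prime](u_{fin})+\mathcal{R}(u_{fin},v_0)\right) = \mathcal{E}_{data}[\Si^\prime],
\]
which is the horizon bound.

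For the null infinity bound, I would first handle the inner limit $v_{fin}\to\infty$ at fixed $u_{fin}$. Since $\Si^\prime$ is extendible to null infinity (Definition \ref{def:extendskri}), the weighted quantity $r\xblin$ admits a finite pointwise limit on $\mathcal{I}^+$, and is uniformly bounded on $[u_0,u_{fin}]\times\{v\geq v_0\}\times S^2$. Fatou's lemma applied to the non-negative sequence $|r\xblin|^2(u,v_{fin},\theta,\phi)$ in the variable $v_{fin}$ then gives
\[
\int_{u_0}^{u_{fin}} du\int_{S^2} d\theta d\phi\, \sin\theta\, |r\xblin|^2(u,\infty,\theta,\phi) \leq \limsup_{v_{fin}\to\infty}\int_{u_0}^{u_{fin}} du \int_{S^2} d\theta d\phi\, r^2\sin\theta\, |\xblin|^2(u,v_{fin}).
\]
The right-hand side is bounded by $\mathcal{E}_{data}[\Si^\prime](u_{fin})+\mathcal{R}(u_{fin},v_0)$ via Theorem \ref{theo:mtheo}, after discarding the non-negative horizon-flux term. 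Sending $u_{fin}\to\infty$, monotone convergence on the left (the integrand is non-negative and the domain exhausts $[u_0,\infty)\times S^2$) together with the established limits on the right produces
\[
\int_{u_0}^{\infty} du\int_{S^2} d\theta d\phi\, r^2\sin\theta\,|\xblin|^2(u,\infty,\theta,\phi) \leq \mathcal{E}_{data}[\Si^\prime],
\]
which is (\ref{xz}).

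I do not expect any real obstacle; the entire argument is a controlled passage to the limit. The only point requiring minor care is justifying the pointwise and integrability behaviour of the weighted curvature/connection quantities at the two boundaries, but this is immediate from the smoothness (Remark \ref{rem:smooth}) and extendibility (Definition \ref{def:extendskri}) hypotheses already built into the class of solutions under consideration.
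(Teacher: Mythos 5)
Your proposal is correct and follows essentially the same route as the paper, which derives the corollary in one sentence by taking $u_{fin}\to\infty$ in Theorem \ref{theo:mtheo} and invoking the regularity of $\Omega\xlin$ at the horizon and the pointwise existence of $r\xblin$ on $\mathcal{I}^+$. You have merely made explicit the standard Fatou/monotone-convergence steps that the paper leaves implicit.
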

Note that the quantity appearing on the left hand side of (\ref{xz}) is the total amount of gravitational radiation measured by far away observers.

%
%
%
%
%
%
%
%
%

\section{A second conservation law} \label{sec:2con}
We end the paper by stating a second conservation law. Unlike the first, it will involve curvature components, which is why we denote the corresponding fluxes by $F\left[\Gamma,R,\mathscr{S} \right]$. More precisely, we define
\begin{align} \label{vflux2}
F_v \left[\Gamma,R,\mathscr{S} \right] \left(u_1,u_2\right)= \int_{u_1}^{u_2} d u \int_{S^2} d\theta d\phi \sin \theta \Big[ 3Mr \olinb \otx -3M \left(1-\frac{4M}{r}\right)   \Olin \otxb \nonumber \\
+\frac{1}{2} \Omega^2 r^4 \left(|\rlin|^2 + |\slin|^2\right) -3Mr \Omega^2 |\elin|^2 + \frac{1}{2} r^4 \Omega^2 |\bblin|^2 \Big]
\end{align}
and
\begin{align} \label{uflux2}
F_u \left[\Gamma,R, \mathscr{S} \right] \left(v_0,v\right)= \int_{v_0}^v d v  \int_{S^2} d\theta d\phi \sin \theta \Big[ 3Mr \olin \otxb +3M \left(1-\frac{4M}{r}\right)   \otx \Olin \nonumber \\
+\frac{1}{2} \Omega^2 r^4 \left(|\rlin|^2 + |\slin|^2\right) -3Mr \Omega^2 |\eblin|^2 + \frac{1}{2} r^4 \Omega^2 |\blin|^2 \Big] \, .
\end{align}
The following conservation law holds:
\begin{proposition} \label{prop:conslaw2}
For any $u_0<u_1< u_2<\infty$ and $v_0 < v_1<v_2 < \infty$ we have the conservation law
\begin{align}
F_v\left[\Gamma,R, \mathscr{S} \right] \left(u_0,u_1\right)+ F_u \left[\Gamma,R,\mathscr{S}\right] \left(v_0,v_1\right) = F_{v_0} \left[\Gamma,R,\mathscr{S}\right] \left(u_0,u_1\right) +  F_{u_0}\left[\Gamma,R,\mathscr{S}\right] \left(v_0,v_1\right) \, .
\end{align}
\end{proposition}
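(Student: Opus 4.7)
The plan mirrors the proof of Proposition \ref{prop:conslaw}: establish the pointwise identity
\[
\partial_v I_v + \partial_u I_u \equiv 0,
\]
where $I_v$ and $I_u$ denote the integrands (with the $\sin\theta$ factor included) of (\ref{vflux2}) and (\ref{uflux2}), and ``$\equiv$'' denotes equality after integration over the $2$-sphere $S^2_{u,v}$, permitting integration by parts in the angular variables. Integrating this identity over the rectangle $[u_0,u_1]\times [v_0,v_1]$ and applying the fundamental theorem of calculus then yields the conservation law. The required inputs are the Bianchi equations (\ref{Bianchi2})--(\ref{Bianchi10}) for the time-derivatives of $\rlin, \slin, \blin, \bblin$, the null structure equations (\ref{propeta}), (\ref{oml1})--(\ref{oml3}), (\ref{uray})--(\ref{vray}), (\ref{dtcb})--(\ref{dbtc}), the linearised Codazzi equations (\ref{ellipchi}), and the elementary Schwarzschild identities $\partial_v r = \Omega^2$, $\partial_v(r\Omega^2)=\Omega^2$, $\partial_v\Omega^2 = 2M\Omega^2/r^2$ (and their $u$-analogues).

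The computation splits naturally into three coupled blocks, each leaving a residue cancelled by the next. The first block contains the curvature-squared terms $\tfrac12\Omega^2 r^4(|\rlin|^2+|\slin|^2)$ (present in both fluxes), $\tfrac12\Omega^2 r^4|\bblin|^2$ (in $F_v$), and $\tfrac12\Omega^2 r^4|\blin|^2$ (in $F_u$). Using Bianchi, one checks that the self-couplings $|\rlin|^2, |\slin|^2, |\blin|^2, |\bblin|^2$ have coefficients that vanish identically after combining the derivative of the prefactor $\tfrac12\Omega^2 r^4$ with the $tr\chi, tr\underline{\chi}, \hat{\omega}, \hat{\underline{\omega}}$ source terms. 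The cross-terms $\rlin\slashed{div}\blin, \slin\slashed{curl}\blin$, etc., produced from the $\slashed{div}$--$\slashed{curl}$ sources in (\ref{Bianchi4})--(\ref{Bianchi7}), pair off (after angular integration by parts using the duality $\int \xi\cdot\slashed{\mathcal{D}}_1^\star(h_1,h_2) = \int(h_1\slashed{div}\xi + h_2\slashed{curl}\xi)$) with the $\slashed{\mathcal{D}}_1^\star(\pm\rlin,\slin)$ source terms in (\ref{Bianchi3}) and (\ref{Bianchi8}) used to differentiate $|\blin|^2$ and $|\bblin|^2$; they cancel. What remains is a residue of mixed terms involving $\blin\cdot\elin, \bblin\cdot\eblin, \rlin\otx, \rlin\otxb$, sourced respectively by the $3\rho\elin, -3\rho\eblin$ terms in (\ref{Bianchi3}), (\ref{Bianchi8}) and the $-(3\rho/2\Omega)\otx, -(3\rho/2\Omega)\otxb$ terms in (\ref{Bianchi4}), (\ref{Bianchi5}).

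The second block treats the $|\elin|^2$ and $|\eblin|^2$ terms: differentiating $-3Mr\Omega^2|\elin|^2$ in $v$ via (\ref{propeta}) together with the identity $\partial_v(r\Omega^2)=\Omega^2$ (and the analogous computation for $-3Mr\Omega^2|\eblin|^2$ in $u$), the $\blin, \bblin$ source terms in (\ref{propeta}) produce exactly the cross-terms needed to cancel the $\blin\cdot\elin, \bblin\cdot\eblin$ residues from the first block, leaving a clean residue proportional to $(1-4M/r)(|\elin|^2-|\eblin|^2)$. The third block handles the metric-level terms $3Mr\olinb\otx, -3M(1-4M/r)\Olin\otxb, 3Mr\olin\otxb, 3M(1-4M/r)\Olin\otx$, differentiated via (\ref{oml1})--(\ref{oml3}), (\ref{uray})--(\ref{vray}), (\ref{dtcb})--(\ref{dbtc}). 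The $\rlin$-producing contributions (from $\partial\olin, \partial\olinb$ in (\ref{oml1})--(\ref{oml2}) and from the $\rlin$-sources in $\partial\otxb, \partial\otx$ in (\ref{dtcb})--(\ref{dbtc})) cancel the $\rlin\otx, \rlin\otxb$ residue of the first block, while the $\slashed{div}\elin, \slashed{div}\eblin$ source terms in $\partial_u\otx, \partial_v\otxb$, combined after angular integration by parts with the identity $2\slashed{\nabla}\Olin = \elin + \eblin$ from (\ref{oml3}), cancel the $(|\elin|^2-|\eblin|^2)$ residue of the second block. The remaining metric-level cross-terms (of the form $\olin\olinb, \Olin\olin, \Olin\olinb, \Olin\otx, \Olin\otxb$, etc.) cancel amongst themselves using the same transport equations.

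The main obstacle is the tight numerical bookkeeping: every numerical coefficient and every power of $\Omega, r, M$ must align precisely for the three cascading cancellations to close, and a single misplaced factor would destroy the identity. The cancellation is more intricate than in Proposition \ref{prop:conslaw}, because the curvature-squared terms couple to both the connection and metric sectors through multiple Bianchi sources simultaneously, but once the three-block structure above has been identified the remaining verification is mechanical algebra.
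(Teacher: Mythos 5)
Your proposal is correct and follows exactly the route the paper intends: the paper's own proof is the single line ``Straightforward computation,'' meaning the same direct computation as in Proposition \ref{prop:conslaw} --- compute $\partial_v$ of the $F_v$-integrand plus $\partial_u$ of the $F_u$-integrand, substitute the linearised Bianchi, null structure and Codazzi equations, and cancel after angular integration by parts. Your three-block organisation and the specific cross-cancellations you identify (e.g.\ the $\blin\cdot\elin$ and $\bblin\cdot\eblin$ terms between the curvature and $\eta$ sectors, the $3M\Omega^2(1-4M/r)(|\elin|^2-|\eblin|^2)$ residue absorbed via $2\slashed{\nabla}\Olin=\elin+\eblin$, and the $3Mr\Omega^2\rlin(\otx+\otxb)$ residue absorbed by the $\olin,\olinb$ transport equations) all check out.
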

\begin{proof}
Straightforward computation.
\end{proof}

We have the following analogue of Proposition \ref{prop:gaugechange}:
\begin{proposition} \label{prop:gaugechange2}
Let $f\left(v,\theta,\phi\right)$ be a smooth gauge function generating a pure gauge solution of the system of gravitational perturbations as in Lemma \ref{lem:exactsol}. Then the flux on fixed constant-$u$ hypersurfaces satisfies
\begin{align}
F_u \left[\Gamma_{}, R_{},\mathscr{S}\right] \left(v_0,v\right)= F_u \left[\Gamma,R,\tilde{\mathscr{S}}\right] \left(v_0,v\right) + \int_{S^2} \sin \theta d\theta d\phi\left( \mathcal{G} \left(v, u,\theta,\phi\right)-\mathcal{G} \left(v_0, u,\theta,\phi\right)\right) \nonumber
\end{align}
where
\begin{align}
\mathcal{G} = 3M r \left( \left(\Olin\right)_{\So} \ \otxb_{\So} - \left({\Olin}\right)_{\Sop} \ {\otxb}_{\Sop} \right)
- 3M^2  \left( \frac{f}{r} {\otx}_{\Sop} \right) \nonumber \\
+ 12M^2 \left(\frac{1}{r^2} f\Omega^2 \left({\Olin}\right)_{\Sop} \right) 
-3M f \Omega^2 r \left(\slashed{div} \, {\elin} + {\rlin} \right)_{\Sop} + \frac{3M}{2} \Big| \nabla_{S^2} \frac{f \Omega^2}{r} \Big|^2- \frac{6M^2}{r^3}  \left(f\Omega^2\right)^2   \, . \nonumber
\end{align} 
In other words, the differences of the fluxes in the old and in the new gauge is a pure boundary term.
\end{proposition}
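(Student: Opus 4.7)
The plan is to follow the exact same strategy as in the proof of Proposition \ref{prop:gaugechange}, substituting $\mathscr{S} = \tilde{\mathscr{S}} + \mathscr{G}$ into the integrand of the flux (\ref{uflux2}) and using Lemma \ref{lem:exactsol} to express all the pure-gauge contributions in terms of the scalar gauge function $f$. Expanding each quadratic expression in the integrand produces three groups of terms: (i) terms involving only $\tilde{\mathscr{S}}$-quantities, which collectively reproduce $F_u[\Gamma,R,\tilde{\mathscr{S}}](v_0,v)$; (ii) \emph{mixed} terms linear in $f$ (or its derivatives) and linear in a dynamical quantity of $\tilde{\mathscr{S}}$; and (iii) terms quadratic in $f$. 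A useful simplification specific to this second conservation law is that $\slin_{\mathscr{G}} = 0$ and $\blin_{\mathscr{G}} = 0$, so the curvature-squared contributions $\tfrac{1}{2}\Omega^2 r^4|\slin|^2$ and $\tfrac{1}{2}r^4 \Omega^2 |\blin|^2$ produce no mixed or quadratic terms, and the $\rlin$-squared term only generates mixed contributions (since $\rlin_{\mathscr{G}} = 6M\Omega^2 f/r^4$ is the only nonzero curvature component in the pure gauge solution apart from $\bblin$).

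For the mixed terms (type ii), the strategy is to write each one as a total $\partial_v$-derivative modulo an angular divergence. The essential inputs are the transport equations (\ref{oml1}), (\ref{uray}), (\ref{dtcb}) and their $\mathscr{G}$-counterparts, the Bianchi identities (\ref{Bianchi4}), (\ref{Bianchi6}), (\ref{Bianchi8}), the propagation equation (\ref{propeta}) for $\elin$, and the Codazzi relation (\ref{ellipchi}), together with the composite identity
\begin{align}
\partial_v \bigl(r^3(\slashed{\mathrm{div}}\,\elin + \rlin)\bigr) = \Omega^2 r^2 \slashed{\mathrm{div}}(\elin + \eblin) + 3M\, \otx
\end{align}
that was already derived and used in the proof of Proposition \ref{prop:gaugechange}. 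Since $\mathscr{G}$ is itself a solution of the system, these equations hold for $\tilde{\mathscr{S}}$ as well. One collects the terms coupling $\partial_v(f\Omega^2)$, $f\Omega^2$, $\Delta_{S^2} f$ and $\slashed{\nabla} f$ to the various $\tilde{\mathscr{S}}$-quantities, integrates by parts in the angular variables to move $\Delta_{S^2}$ and $\slashed{\nabla}$ onto the $\tilde{\mathscr{S}}$ side, and then uses the evolution equations to recognise the resulting bulk expressions as pure $\partial_v$-derivatives with boundary values matching the contribution in $\mathcal{G}$.

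For the quadratic-in-$f$ terms (type iii), one substitutes the Lemma \ref{lem:exactsol} formulas and computes directly, integrating by parts on $S^2$ whenever a $\Delta_{S^2}$ or $\slashed{\nabla}$ appears. As in the Schwarzschild-weight bookkeeping of the $\tilde{\mathcal{A}}_j$ terms in Proposition \ref{prop:gaugechange}, the $r$-dependent coefficients will rearrange so that every piece not already a $\partial_v$-derivative cancels with another, producing the $-\tfrac{6M^2}{r^3}(f\Omega^2)^2$ and $\tfrac{3M}{2}\bigl|\slashed{\nabla}_{S^2}(f\Omega^2/r)\bigr|^2$ pieces of $\mathcal{G}$. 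The $3Mr((\Olin)_{\So}\otxb_{\So} - (\Olin)_{\Sop}\otxb_{\Sop})$ piece of $\mathcal{G}$ will emerge from applying the mixed-term identity analogous to (\ref{a1com}) to the $3Mr\,\olin\,\otxb$ term in (\ref{uflux2}) and then re-combining the $\mathscr{G}$ contribution with the raw boundary term.

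I expect the main obstacle to be purely bookkeeping: there are roughly five distinct integrand pieces, each of which upon expansion produces a handful of mixed and quadratic contributions, so tracking the $r$, $\Omega^2$ and $M$ weights and verifying all non-boundary terms cancel is the bulk of the work. The only place where a conceptual choice has to be made is in handling the curvature-squared term $\tfrac{1}{2}\Omega^2 r^4 |\rlin|^2$: the mixed term is $\Omega^2 r^4 \rlin_{\Sop} \cdot (6M\Omega^2 f/r^4) = 6M \Omega^4 f\, \rlin_{\Sop}$, which must be combined with the $\olin$- and $\otx$-derived mixed terms (using (\ref{Bianchi4}) or alternatively (\ref{oml1})--(\ref{oml2})) to produce the $-3Mf\Omega^2 r(\slashed{\mathrm{div}}\,\elin + \rlin)_{\Sop}$ boundary contribution in $\mathcal{G}$. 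Once all the group (ii) and (iii) terms are reduced to $\partial_v$-exact form, evaluating between $v_0$ and $v$ gives exactly the claimed $\mathcal{G}(v,u,\theta,\phi) - \mathcal{G}(v_0,u,\theta,\phi)$, completing the proof.
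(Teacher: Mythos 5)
Your proposal follows essentially the same route as the paper: the published proof of Proposition \ref{prop:gaugechange2} is simply ``straightforward computation'', and the computation it refers to is precisely the decomposition you describe --- pure-$\tilde{\mathscr{S}}$, mixed, and $f$-quadratic terms, with the mixed terms closed into $\partial_v$-exact form via the transport equations and the same composite identity $\partial_v\bigl(r^3(\slashed{div}\,\elin+\rlin)\bigr)=\Omega^2 r^2\slashed{div}(\elin+\eblin)+3M\otx$ already used for Proposition \ref{prop:gaugechange}. Your structural observations (that $\blin$ and $\slin$ vanish for the pure gauge solution so the corresponding curvature-squared terms contribute nothing, and that the $6M\Omega^4 f\,\rlin_{\Sop}$ cross term must be absorbed into the $-3Mf\Omega^2 r(\slashed{div}\,\elin+\rlin)_{\Sop}$ boundary contribution) are correct, so the plan is sound.
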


\begin{proof}
Straightforward computation.
\end{proof}

Similar arguments to those presented in Sections \ref{sec:gaugechoice} and \ref{sec:maintheo} lead to control of additional fluxes. These will be exploited elsewhere.

\section{Acknowledgements}
I acknowledge support through an ERC grant. I am grateful to Martin Taylor and Thomas Johnson for checking many of the computations and to Mihalis Dafermos for several insightful comments and useful suggestions during the preparation of the manuscript.

\bibliographystyle{hacm}
\bibliography{conslawbib}

\begin{thebibliography}{10}

\bibitem{Aretakis3}
{\sc Aretakis, S.}
\newblock {A note on instabilities of extremal black holes under scalar
  perturbations from afar}.
\newblock {\em Class.~Quant.~Grav. 30\/} (2013), 095010, arXiv:1212.1103.

\bibitem{Aretakis2}
{\sc Aretakis, S.}
\newblock {Horizon Instability of Extremal Black Holes}.
\newblock {\em Adv. Theor. Math. Phys. 19\/} (2015), 507--530, arXiv:1206.6598.

\bibitem{bardeen1973}
{\sc Bardeen, J.~M., and Press, W.~H.}
\newblock {Radiation fields in the Schwarzschild background}.
\newblock {\em J.~Math.~Phys. 14}, 1 (1973), 7--19.

\bibitem{BlueSoffer}
{\sc Blue, P., and Soffer, A.}
\newblock {The wave equation on the Schwarzschild metric. II: Local decay for
  the spin 2 Regge Wheeler equation}.
\newblock {\em J. Math. Phys. 46\/} (2005), 012502, gr-qc/0310066.

\bibitem{Burnett}
{\sc Burnett, G., and Wald, R.}
\newblock {A conserved current for perturbations of Einstein-Maxwell
  spacetimes}.
\newblock {\em Proc. R. Soc. Lond. A 430\/} (1990), 57--67.

\bibitem{Chandrasekhar}
{\sc Chandrasekhar, S.}
\newblock {\em The Mathematical Theory of Black Holes}, 3~ed.
\newblock Oxford University Press, Oxford, 1992.

\bibitem{Ferrari}
{\sc Chandrasekhar, S., and Ferrari, V.}
\newblock {The flux integral for axisymmetric perturbations of static
  space-times}.
\newblock {\em Proc. R. Soc. Lond. A 428\/} (1990), 325--349.

\bibitem{DHR}
{\sc Dafermos, M., Holzegel, G., and Rodnianski, I.}
\newblock {Linear Stability of the Schwarzschild Solution under Gravitational
  Perturbations}.
\newblock arXiv:1601.06467.

\bibitem{Mihalisnotes}
{\sc Dafermos, M., and Rodnianski, I.}
\newblock {Lectures on black holes and linear waves}.
\newblock In {\em Evolution equations, Clay Mathematics Proceedings, Vol. 17}.
  Amer. Math. Soc., Providence, RI, 2013, pp.~97--205, arXiv:0811.0354.

\bibitem{withYakov}
{\sc Dafermos, M., Rodnianski, I., and Shlapentokh-Rothman, Y.}
\newblock {Decay for solutions of the wave equation on Kerr exterior spacetimes
  III: The full subextremal case $|a| < M$}.
\newblock {\em arXiv:1402.7034\/} (2014).

\bibitem{Dain}
{\sc Dain, S., and de~Austria, I.~G.}
\newblock {On the linear stability of the extreme Kerr black hole under axially
  symmetric perturbations}.
\newblock {\em Class. Quant. Grav. 31}, 19 (2014), 195009, arXiv:1402.2848.

\bibitem{Dold}
{\sc Dold, D.}
\newblock {Unstable mode solutions to the Klein-Gordon equation in Kerr-anti-de
  Sitter spacetimes}.
\newblock arXiv:1509.04971.

\bibitem{Friedman}
{\sc Friedman, J.~L.}
\newblock {Generic instability of rotating relativistic stars}.
\newblock {\em Commun. Math. Phys. 62}, 3 (1978), 247--278.

\bibitem{GreenWald}
{\sc Green, S.~R., Hollands, S., Ishibashi, A., and Wald, R.~M.}
\newblock {Superradiant instabilities of asymptotically anti-de Sitter black
  holes}.
\newblock arXiv:1512.02644.

\bibitem{HollandsWald}
{\sc Hollands, S., and Wald, R.~M.}
\newblock {Stability of Black Holes and Black Branes}.
\newblock {\em Commun. Math. Phys. 321\/} (2013), 629--680, arXiv:1201.0463.

\bibitem{Holzegelspin2}
{\sc Holzegel, G.}
\newblock {Ultimately Schwarzschildean spacetimes and the black hole stability
  problem}.
\newblock arXiv:1010.3216.

\bibitem{Iyer}
{\sc Iyer, V., and Wald, R.~M.}
\newblock {Some properties of Noether charge and a proposal for dynamical black
  hole entropy}.
\newblock {\em Phys. Rev. D50\/} (1994), 846--864, gr-qc/9403028.

\bibitem{KayWald}
{\sc Kay, B.~S., and Wald, R.~M.}
\newblock {Linear Stability Of Schwarzschild Under Perturbations Which Are
  Nonvanishing On The Bifurcation Two Sphere}.
\newblock {\em Class.~Quant.~Grav. 4\/} (1987), 893--898.

\bibitem{Keir}
{\sc Keir, J.}
\newblock {Stability, Instability, Canonical Energy and Charged Black Holes}.
\newblock {\em Class. Quant. Grav. 31}, 3 (2014), 035014, arXiv:1306.6087.

\bibitem{Lee}
{\sc Lee, J., and Wald, R.~M.}
\newblock {Local symmetries and constraints}.
\newblock {\em J. Math. Phys. 31\/} (1990), 725--743.

\bibitem{martelpoisson}
{\sc Martel, K., and Poisson, E.}
\newblock {Gravitational Perturbations of the Schwarzschild Spacetime: A
  practical covariant and gauge-invariant formalism}.
\newblock {\em Phys. Rev. D 71\/} (May 2005), 104003.

\bibitem{Moncrief}
{\sc Moncrief, V.}
\newblock {Gravitational Perturbations of Spherically Symmetric Systems. I. The
  Exterior Problem}.
\newblock {\em Ann.~Phys. 88\/} (1974), 323--342.

\bibitem{newmanpenrose}
{\sc Newman, E., and Penrose, R.}
\newblock An approach to gravitational radiation by a method of spin
  coefficients.
\newblock {\em J.~Math.~Phys. 3}, 3 (1962), 566--578.

\bibitem{Prabhu}
{\sc Prabhu, K., and Wald, R.~M.}
\newblock {Black Hole Instabilities and Exponential Growth}.
\newblock {\em Commun. Math. Phys. 340}, 1 (2015), 253--290, arXiv:1501.02522.

\bibitem{Regge}
{\sc Regge, T., and Wheeler, J.~A.}
\newblock {Stability of a Schwarzschild Singularity}.
\newblock {\em Phys. Rev. 108\/} (1957), 1063--1069.

\bibitem{schwarzschild1916}
{\sc Schwarzschild, K.}
\newblock {{\"U}ber das Gravitationsfeld eines Massenpunktes nach der
  Einsteinschen Theorie}.
\newblock {\em Sitzungsberichte der K{\"o}niglich Preu{\ss}ischen Akademie der
  Wissenschaften (Berlin) 1\/} (1916), 189--196.

\bibitem{Wald}
{\sc Wald, R.~M.}
\newblock {\em General Relativity}.
\newblock The University of Chicago Press, Chicago, 1984.

\bibitem{Whiting}
{\sc Whiting, B.~F.}
\newblock {Mode Stability of the Kerr black hole}.
\newblock {\em J.~Math.~Phys. 30\/} (1989), 1301--1306.

\bibitem{whitingsurvey}
{\sc Whiting, B.~F.}
\newblock Stability of black holes.
\newblock In {\em Black Holes, Gravitational Radiation and the Universe}.
  Springer, 1999, pp.~17--32.

\end{thebibliography}
\end{document}